\begin{document}

\newtheorem{lem}{Lemma}
\newtheorem{prop}{Proposition}
\newtheorem{cor}{Corollary}
\newtheorem{remark}{Remark}
\newtheorem{defin}{Definition}
\newtheorem{thm}{Theorem}

\newcounter{MYtempeqncnt}

\title{Joint Wireless Information and Energy Transfer in a Two-User MIMO Interference Channel}

\author{Jaehyun Park,~\IEEEmembership{Member,~IEEE,}
     Bruno Clerckx,~\IEEEmembership{Member,~IEEE,}
\thanks{J. Park is with the Broadcasting and
Telecommunications Convergence Research Laboratory, Electronics
and Telecommunications Research Institute (ETRI), Daejeon, Korea
(e-mail: jhpark@ee.kaist.ac.kr)}
\thanks{B. Clerckx is with the Department of Electrical and Electronic
Engineering, Imperial College London, South Kensington Campus
London SW7 2AZ, United Kingdom (e-mail:b.clerckx@imperial.ac.uk)}
}

\maketitle

\begin{abstract}
This paper investigates joint wireless information and energy
transfer in a two-user MIMO interference channel, in which each
receiver either decodes the incoming information data (information
decoding, ID) or harvests the RF energy (energy harvesting, EH) to
operate with a potentially perpetual energy supply. In the
two-user interference channel, we have four different scenarios
according to the receiver mode -- ($ID_1$, $ID_2$), ($EH_1$,
$EH_2$), ($EH_1$, $ID_2$), and ($ID_1$, $EH_2$). While the maximum
information bit rate is unknown and finding the optimal
transmission strategy is still open for ($ID_1$, $ID_2$), we have
derived the optimal transmission strategy achieving the maximum
harvested energy for ($EH_1$, $EH_2$). For ($EH_1$, $ID_2$), and
($ID_1$, $EH_2$), we find a necessary condition of the optimal
transmission strategy and, accordingly, identify the achievable
rate-energy (R-E) tradeoff region for two transmission strategies
that satisfy the necessary condition - maximum energy beamforming
(MEB) and minimum leakage beamforming (MLB). Furthermore, a new
transmission strategy satisfying the necessary condition -
signal-to-leakage-and-energy ratio (SLER) maximization beamforming
- is proposed and shown to exhibit a better R-E region than the
MEB and the MLB strategies. Finally, we propose a mode scheduling
method to switch between ($EH_1$, $ID_2$) and ($ID_1$, $EH_2$)
based on the SLER.
\end{abstract}

\begin{keywords}
Joint wireless information and energy transfer, MIMO interference
channel, Rank-one beamforming
\end{keywords}

\section{Introduction}
\label{sec:intro} Over the last decade, there has been a lot of
interest to transfer energy wirelessly and recently,
radio-frequency (RF) radiation has become a viable source for
energy harvesting. It is nowadays possible to transfer the energy
wirelessly with a reasonable efficiency over small distances and,
furthermore, the wireless sensor network (WSN) in which the
sensors are capable of harvesting RF energy to power their own
transmissions has been introduced in industry (\cite{Soljacic,
Yates, Vullers, Fiez} and references therein).

The energy harvesting function can be exploited in either transmit
side \cite{Ozel,RRajesh, RRajesh1, KIshibashi, YLuo} or receive
side \cite{Zhang1, Zhang2, KHuang1, ANasir}. For the energy
harvesting transmitter, energy harvesting scheduling and transmit
power allocation have been considered and, for the energy
harvesting receiver, the management of information decoding and
energy harvesting has been developed. Furthermore, because RF
signals carry information as well as energy, ``joint wireless
information and energy transfer'' in conjunction with the energy
harvesting receiver has been investigated \cite{Zhang1, Zhang2,
KHuang1, ANasir}. That is, previous works have studied the
fundamental performance limits and the optimal transmission
strategies of the joint wireless information and energy transfer
in the cellular downlink system with a single base station (BS)
and multiple mobile stations (MSs) \cite{KHuang1} and in the
cooperative relay system \cite{ANasir} and in the broadcasting
system \cite{Zhang1, Zhang2} with a single energy receiver and a
single information receiver when they are separately located or
co-located.


There have been very few studies of joint wireless information and
energy transfer on the interference channel (IFC) models
\cite{Tutuncuoglu1, Tutuncuoglu2, KHuang2}. In \cite{Tutuncuoglu1,
Tutuncuoglu2}, the authors have considered a two-user single-input
single-output (SISO) IFC and derived the optimal power scheduling
at the energy harvesting transmitters that maximizes the sum-rate
given harvested energy constraints. In \cite{KHuang2}, the authors
have investigated joint information and energy transfer in
multi-cell cellular networks with single-antenna BSs and
single-antenna MSs. To the best of the authors' knowledge, the
general setup of multiple-input multiple-output (MIMO) IFC models
accounting for joint wireless information and energy transfer has
not been addressed so far.

As an initial step, in this paper, we investigate a joint wireless
information and energy transfer in a two-user MIMO IFC, where each
receiver either decodes the incoming information data (information
decoding, ID) or harvests the RF energy (energy harvesting, EH) to
operate with a potentially perpetual energy supply. Because
practical circuits and hardware that harvest energy from the
received RF signal are not yet able to decode the information
carried through the same RF signal \cite{Zhang1, Zhang2,
ZhouZhangHo}, we assume that the receiver cannot decode the
information and simultaneously harvest
energy. 
It is also assumed that the two (Tx 1,Tx 2) transmitters have
knowledge of their local CSI only, i.e. the CSI corresponding to
the links between a transmitter and all receivers (Rx 1, Rx 2). In
addition, the transmitters do not share the information data to be
transmitted and their CSI and, furthermore, the interference is
assumed not decodable at the receiver nodes as in \cite{Shen}.
That is, Tx 1 (Tx 2) cannot transfer the information to Rx 2 (Rx
1). In a two-user IFC, we then have four different scenarios
according to the Rx mode -- ($ID_1$, $ID_2$), ($EH_1$, $EH_2$),
($EH_1$, $ID_2$), and ($ID_1$, $EH_2$). Because, for ($ID_1$,
$ID_2$), the maximum information bit rate is unknown and finding
the optimal transmission strategy is still an open problem in
general, we investigate the achievable rate when a well-known
iterative water-filling algorithm \cite{Scutari,WeiYu,AGoldsmith}
is adopted for ($ID_1$, $ID_2$) with no CSI sharing between two
transmitters. For ($EH_1$, $EH_2$), we derive the optimal
transmission strategy achieving the maximum harvested energy.
Because the receivers operate in a single mode such as ($ID_1$,
$ID_2$) and ($EH_1$, $EH_2$), when the information is transferred,
no energy is harvested from RF signals and vice versa. For
($EH_1$, $ID_2$) and ($ID_1$, $EH_2$), the achievable energy-rate
(R-E) trade-off region is not easily identified and the optimal
transmission strategy is still unknown. However, in this paper, we
find a necessary condition of the optimal transmission strategy,
in which one of the transmitters should take a rank-one energy
beamforming strategy with a proper power control. Accordingly, the
achievable R-E tradeoff region is identified for two different
rank-one beamforming strategies - maximum energy beamforming (MEB)
and minimum leakage beamforming (MLB). 
Furthermore, we also propose a new transmission strategy that
satisfies the necessary condition - signal-to-leakage-and-energy
ratio (SLER) maximization beamforming. Note that the SLER
maximizing approach is comparable to the
signal-to-leakage-and-noise ratio (SLNR) maximization beamforming
\cite{Sadek, JPark} which has been developed for the multi-user
MIMO data transmission, not considering the energy transfer. The
simulation results demonstrate that the proposed SLER maximization
strategy exhibits wider R-E region than the conventional
transmission methods such as MLB, MEB, and SLNR beamforming.
Finally, we propose a mode scheduling method to switch between
($EH_1$, $ID_2$) and ($ID_1$, $EH_2$) based on the SLER that
further extends R-E tradeoff region.

The rest of this paper is organized as follows. In Section
\ref{sec:systemmodel}, we introduce the system model for two-user
MIMO IFC. In Section \ref{sec:single}, we discuss the transmission
strategy for two receivers on a single mode, i.e. ($ID_1$, $ID_2$)
and ($EH_1$, $EH_2$). In Section \ref{sec:oneIDoneEH}, we derive
the necessary condition for the optimal transmission strategy and
investigate the achievable rate-energy (R-E) region for ($EH_1$,
$ID_2$) and ($ID_1$, $EH_2$) and, in Section
\ref{sec:SLER_maximizing}, propose the SLER maximization strategy.
In Section \ref{sec:discussion} and Section \ref{sec:simulation},
we provide several discussion and simulation results,
respectively, and in Section \ref{sec:conc} we give our
conclusions.

Throughout the paper, matrices and vectors are represented by bold
capital letters and bold lower-case letters, respectively. The
notations $({\bf A})^{H} $, $({\bf A})^{\dagger} $, $({\bf A})_i$,
$[{\bf A}]_i$, $tr({\bf A})$, and $\det({\bf A})$ denote the
conjugate transpose, pseudo-inverse, the $i$th row, the $i$th
column, the trace, and the determinant of a matrix ${\bf A}$,
respectively. The matrix norm $\|{\bf A}\|$ and $\|{\bf A}\|_F$
denote the 2-norm and Frobenius norm of a matrix ${\bf A}$,
respectively, and the vector norm $\|{\bf a}\|$ denotes the 2-norm
of a vector ${\bf a}$. In addition, $(a)^+ \triangleq \max (a, 0)$
and ${\bf A} \succeq 0$ means that a matrix ${\bf A}$ is positive
semi-definite. Finally, ${\bf I}_{M}$ denotes the $M \times M$
identity matrix.

\section{System model}
\label{sec:systemmodel}

\begin{figure}
\begin{center}
\begin{tabular}{c}
\includegraphics[height=4cm]{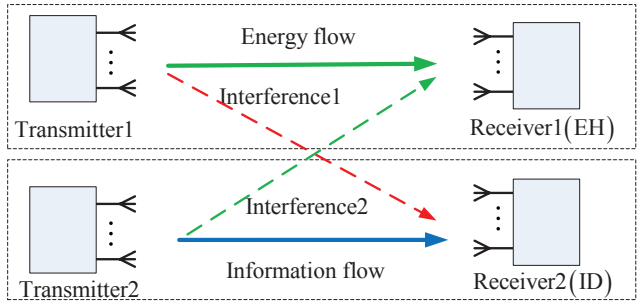}
\end{tabular}
\end{center}
\caption[JWIET_two_userIC_block]
{ \label{JWIET_two_userIC_block} Two-user MIMO IFC in ($EH_1$,
$ID_2$) mode.}
\end{figure}

We consider a two-user MIMO IFC system where two transmitters,
each with $M_t$ antennas, are simultaneously transmitting their
signals to two receivers, each with $M_r$ antennas, as shown in
Fig. \ref{JWIET_two_userIC_block}. Note that each receiver can
either decode the information or harvest energy from the received
signal, but it cannot execute the information decoding and energy
harvesting at the same time due to the hardware limitations. That
is, each receiver can switch between ID
mode and EH mode at each frame or time slot.\footnote{
Note that the switching criterion between ID mode and EH mode
depends on the receiver's condition such as the available energy
in the storage and the required processing or circuit power. In
this paper, we focus on the achievable rate and harvested energy
obtained by the transferred signals from both transmitters in the
IFC according to the different receiver modes. 
The mode switching policy based on the receiver's condition is
left as a future work. We assume that the mode decided by the
receiver is sent to both transmitters through the zero-delay and
error-free feedback link at the beginning of the frame.} We assume
that the transmitters have perfect knowledge of the CSI of their
associated links (i.e. the links between a transmitter and all
receivers) but do not share those CSI between them.
In addition, $M_t = M_r = M$ for simplicity, but it can be
extended to general antenna configurations. Assuming a frequency
flat fading channel, which is static over several frames, the
received signal ${\bf y}_i \in \mathbb{C}^{M \times 1}$ for $i=1,
2$ can be written as
\begin{eqnarray}\label{Sys_1}\nonumber
{\bf y}_1 = {\bf H}_{11} {\bf x}_1 +{\bf H}_{12} {\bf x}_2 + {\bf
n}_1,
\\
{\bf y}_2 = {\bf H}_{21} {\bf x}_1 +{\bf H}_{22} {\bf x}_2 + {\bf
n}_2,
\end{eqnarray}
where ${\bf n}_i \in \mathbb{C}^{M\times 1} $ is a complex white
Gaussian noise vector with a covariance matrix ${\bf I}_{M}$ and
${\bf H}_{ij} \in \mathbb{C}^{M\times M}$ is the normalized
frequency-flat fading channel from the $j$th transmitter to the
$i$th receiver such as $\sum_{l,k= 1}^{M} |h_{ij}^{(l,k)}|^2 =
\alpha_{ij}M$ \cite{Loyka}. Here, $h_{ij}^{(l,k)}$ is the $(l,
k)$th element of ${\bf H}_{ij}$ and $\alpha_{ij}\in [0,1]$. We
assume that $ {\bf H}_{ij}$ has a full rank. 
The vector ${\bf x}_j \in \mathbb{C}^{M \times 1}$ is the transmit
signal, in which the independent messages can be conveyed, at the
$j$th transmitter with a transmit power constraint for $j= 1$ and
$2$ as
\begin{eqnarray}\label{Sys_2}
E[\|{\bf x}_j\|^2] \leq P  {\text{ for }} j=1 \text{ and } 2.
\end{eqnarray}

When the receiver operates in ID mode, the achievable rate at
$i$th receiver, $R_i$, is given by \cite{Scutari}
\begin{eqnarray}\label{Sys_3_revised}
R_i = \log \det ({\bf I}_{M} +{\bf H}_{ii}^H{\bf R}_{-i}^{-1}{\bf
H}_{ii}{\bf Q}_i ),
\end{eqnarray}
where ${\bf R}_{-i}$ indicates the covariance matrix of noise and
interference at the $i$th receiver, i.e.,
\begin{eqnarray}\nonumber
{\bf R}_{-1} = {\bf I}_{M} + {\bf H}_{12}{\bf Q}_2 {\bf
H}_{12}^H,\\\nonumber {\bf R}_{-2} = {\bf I}_{M} + {\bf
H}_{21}{\bf Q}_1 {\bf H}_{21}^H.
\end{eqnarray}
Here, ${\bf Q}_j = E [{\bf x}_j {\bf x}_j^H ]$ denotes the
covariance matrix of the transmit signal at the $j$th transmitter
and, from (\ref{Sys_2}), $tr({\bf Q}_j) \leq P $.

For EH mode, it can be assumed that the total harvested power
$E_i$ at the $i$th receiver (more exactly, harvested energy
normalized by the baseband symbol period) is given by
\begin{eqnarray}\label{Sys_3}\nonumber
E_i &=& \zeta_i E[\|{\bf y}_i \|^2 ]\\
&=& \zeta_i tr\left( \sum_{j=1}^2{\bf H}_{ij}{\bf Q}_j{\bf
H}_{ij}^H +{\bf I}_{M} \right),
\end{eqnarray}
where $\zeta_i$ denotes the efficiency constant for converting the
harvested energy to electrical energy to be stored \cite{Vullers,
Zhang1}. For simplicity, it is assumed that $\zeta_i=1$ and the
noise power is negligible compared to the transferred energy from
each transmitters.\footnote{In this paper, we assume the system
operates in the high signal-to-noise ratio (SNR) regime, which is
also consistent with the practical wireless energy transfer
requires a high-power transmission, but we also discuss the low
SNR regime in Section \ref{sec:discussion} as well.} That is,
\begin{eqnarray}\label{Sys_4}E_i &\approx& tr\left( \sum_{j=1}^2{\bf
H}_{ij}{\bf Q}_j{\bf H}_{ij}^H \right)\nonumber\\
&=& tr\left( {\bf H}_{i1}{\bf Q}_1{\bf H}_{i1}^H \right)+tr\left(
{\bf H}_{i2}{\bf Q}_2{\bf H}_{i2}^H \right)\nonumber \\&=&
E_{i1}+E_{i2},
\end{eqnarray}
where $E_{ij}=tr\left( {\bf H}_{ij}{\bf Q}_j{\bf H}_{ij}^H
\right)$ denoting the energy transferred from the $j$th
transmitter to the $i$th receiver.

Interestingly, when the receiver decodes the information data from
the associated transmitter under the assumption that the signal
from the other transmitter is not decodable \cite{Shen}, the
signal from the other transmitter becomes an interference to be
defeated. In contrast, when the receiver harvests the energy, it
becomes a useful energy-transferring source. Fig.
\ref{JWIET_two_userIC_block} illustrates an example of the
receiving mode ($EH_1$, $ID_2$), where the interference1 (dashed
red line) should be reduced for ID, while the interference2
(dashed green line) be maximized for EH. In what follows, for four
possible receiving modes, we investigate the achievable
rate-harvested energy tradeoff. In addition, the corresponding
transmission strategy (more specifically, transmit signal design)
is presented.

\section{Two receivers on a single mode}\label{sec:single}

\subsection{Two IDs: maximum achievable sum rate}

For the scenario ($ID_1$, $ID_2$), it is desirable to obtain the
maximum achievable sum rate. That is, the problem can be
formulated as follows:

\begin{eqnarray}\label{twoID_1}
(P1) {\text{ maximize }}& \sum_{i=1}^{2} R_{i}\\\label{power_2_2}
{\text{subject to }}&tr({\bf Q}_j) \leq P,~ {\bf Q}_j \succeq{\bf
0}\quad {\text{for }} j = 1, 2,
\end{eqnarray}

The solution of (P1) has been extensively considered in many
previous communication researches \cite{Scutari,WeiYu,AGoldsmith},
where the iterative water-filling algorithms have been developed
to maximize the achievable rate in a distributed manner with no
CSI sharing between the transmitters. This is briefly summarized
in Algorithm 1:

\vspace*{10pt}Algorithm 1. {\it{\underline{Iterative
Water-filling:}}}
\begin{enumerate}
\item Initialize $n=0$ and ${\bf Q}_j^{(0)} \in \mathcal{Q}_P$ for
$j=1,2$, where
\begin{eqnarray}\label{twoID_2}
\mathcal{Q}_P \triangleq \{{\bf Q} \in \mathbb{C}^{M \times M}:
{\bf Q} \succeq {\bf 0}, tr({\bf Q}) = P\}.
\end{eqnarray}
\item For $n=0:N_{max}$, where $N_{max}$ is the maximum number of iterations\footnote{Generally, $N_{max} =20$ is sufficient for
the solutions to converge.}\\
~~~~~Update ${\bf Q}_j^{(n+1)}$ for $j=1,2$ as follows:
\begin{eqnarray}\label{twoID3}
\!{\bf Q}_j^{(n\!+\!1)}\!=\! \left\{\! \begin{array}{c} WF({\bf
H}_{jj}, {\bf R}_{-j}^{(n)},P),~\text{if ${\bf R}_{-j}^{(n)}$ is
updated,\!}\!
\\{\bf Q}_j^{(n)}, \quad \text{otherwise,}
\end{array} \!\right.\!\!
\end{eqnarray}
where ${\bf R}_{-j}^{(n)}$ indicates the covariance matrix of
noise and interference in the $j$th receiver at the $n$th
iteration, i.e.,
\begin{eqnarray}\nonumber
{\bf R}_{-1}^{(n)} = {\bf I}_{M} + {\bf H}_{12}{\bf Q}_2^{(n)}
{\bf H}_{12}^H,\\\nonumber {\bf R}_{-2}^{(n)} = {\bf I}_{M} + {\bf
H}_{21}{\bf Q}_1^{(n)} {\bf H}_{21}^H.
\end{eqnarray}
Note that ${\bf R}_{-j}^{(n)}$ is measured at each receiver
similarly to the way it has been done in \cite{Scutari} and,
furthermore, ${\bf Q}_j^{(n)}$ is computed at the receiver and
reported to the transmitter through the zero-delay and error-free
feedback link.
\item Finally, ${\bf Q}_j = {\bf Q}_j^{N_{max}+1}$ for $j=1,2$.
\end{enumerate}
 \vspace*{10pt}
Here, $WF()$ denotes the water-filling operator given as
\cite{Scutari}:
\begin{eqnarray}\label{twoID4}
WF({\bf H}_{ii}, {\bf R}, P) ={\bf U}_i (\mu_i{\bf I}_M - {\bf
D}_i^{-1})^+ {\bf U}_i^H,
\end{eqnarray}
where ${\bf U}_i$ and ${\bf D}_i$ are obtained from the eigenvalue
decomposition of ${\bf H}_{ii}^H{\bf R}^{-1}{\bf H}_{ii}$. That
is, $ {\bf H}_{ii}^H{\bf R}^{-1}{\bf H}_{ii} = {\bf U}_i {\bf D}_i
{\bf U}_i^H$, and $\mu_i$ denotes the water level that satisfies
the transmit power constraint as $tr\{(\mu_i{\bf I}_M - {\bf
D}_i^{-1})^+\} = P$.

In the scenario ($ID_1$, $ID_2$), because both receivers decode
the information, the harvested energy becomes zero.

\subsection{Two EHs: maximum harvested energy}

For the scenario ($EH_1$, $EH_2$), both receivers want to achieve
the maximum harvested energy. That is, the problem can be
formulated as:
\begin{eqnarray}\label{twoEH_1}
(P2) {\text{ maximize }}& \sum_{i=1}^{2} E_{i}\\\label{power_2_2}
{\text{subject to }}&tr({\bf Q}_j) \leq P,~ {\bf Q}_j \succeq{\bf
0}\quad {\text{for }} j = 1, 2,
\end{eqnarray}
The following proposition gives the optimal solution for the
problem (P2).

\begin{prop}\label{prop1} The optimal ${\bf Q}_j$ for (P2) has a rank equal to one and is given as ${\bf Q}_j = P [\bar{\bf V}_{j}]_1[\bar{\bf V}_{j}]_1^H$,
where $\bar{\bf V}_{j}$ is a $M \times M$ unitary matrix obtained
from the SVD of $\bar{\bf H}_{j} \triangleq
\left[\begin{array}{c}{\bf H}_{1j}\\{\bf H}_{2j}
\end{array} \right] $. That is, $\bar{\bf H}_{j} = \bar{\bf U}_{j} \bar{\bf \Sigma}_{j} \bar{\bf
V}_{j}^H$, where $\bar{\bf \Sigma}_{j} = diag\{\bar\sigma_{j, 1
},..., \bar\sigma_{j, M } \}$ with $\bar\sigma_{j, 1 }\geq...\geq
\bar\sigma_{j, M }$.
\end{prop}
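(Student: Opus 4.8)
The plan is to exploit the fact that the sum objective of (P2) decouples across the two transmit covariance matrices, so that each ${\bf Q}_j$ can be optimized separately, and each separate problem is a classical maximization of a linear functional of a PSD matrix under a trace constraint. First I would rewrite the objective using (\ref{Sys_4}) together with the linearity and the cyclic property of the trace:
\begin{eqnarray}\nonumber
\sum_{i=1}^{2} E_i = \sum_{i=1}^{2}\sum_{j=1}^{2} tr\left({\bf H}_{ij}{\bf Q}_j{\bf H}_{ij}^H\right) = \sum_{j=1}^{2} tr\left( \left( {\bf H}_{1j}^H{\bf H}_{1j} + {\bf H}_{2j}^H{\bf H}_{2j} \right) {\bf Q}_j \right) = \sum_{j=1}^{2} tr\left( \bar{\bf H}_j^H\bar{\bf H}_j\, {\bf Q}_j \right),
\end{eqnarray}
since $\bar{\bf H}_j^H\bar{\bf H}_j = {\bf H}_{1j}^H{\bf H}_{1j}+{\bf H}_{2j}^H{\bf H}_{2j}$ by the definition of $\bar{\bf H}_j$. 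The feasible set of (P2) is a product of the two per-transmitter sets $\{{\bf Q}_j\succeq{\bf 0}: tr({\bf Q}_j)\le P\}$, so (P2) splits into the two independent problems: maximize $tr(\bar{\bf H}_j^H\bar{\bf H}_j{\bf Q}_j)$ over ${\bf Q}_j\succeq{\bf 0}$ with $tr({\bf Q}_j)\le P$, for $j=1,2$.

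Next I would solve each single-matrix problem by passing to the eigenbasis of $\bar{\bf H}_j^H\bar{\bf H}_j$. From the (compact) SVD $\bar{\bf H}_j=\bar{\bf U}_j\bar{\bf \Sigma}_j\bar{\bf V}_j^H$ one has $\bar{\bf H}_j^H\bar{\bf H}_j = \bar{\bf V}_j\bar{\bf \Sigma}_j^2\bar{\bf V}_j^H$, an eigendecomposition with eigenvalues $\bar\sigma_{j,1}^2\ge\cdots\ge\bar\sigma_{j,M}^2>0$ (strictly positive because each ${\bf H}_{ij}$ has full rank). Setting $\tilde{\bf Q}_j\triangleq\bar{\bf V}_j^H{\bf Q}_j\bar{\bf V}_j$, which is again PSD with $tr(\tilde{\bf Q}_j)=tr({\bf Q}_j)$, the objective becomes $tr(\bar{\bf \Sigma}_j^2\tilde{\bf Q}_j)=\sum_{m=1}^{M}\bar\sigma_{j,m}^2[\tilde{\bf Q}_j]_{mm}$. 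Since the diagonal entries of a PSD matrix are nonnegative and here sum to at most $P$,
\begin{eqnarray}\nonumber
tr\left(\bar{\bf H}_j^H\bar{\bf H}_j{\bf Q}_j\right) = \sum_{m=1}^{M}\bar\sigma_{j,m}^2[\tilde{\bf Q}_j]_{mm} \le \bar\sigma_{j,1}^2\sum_{m=1}^{M}[\tilde{\bf Q}_j]_{mm} \le P\,\bar\sigma_{j,1}^2,
\end{eqnarray}
and equality is attained by the choice ${\bf Q}_j=P[\bar{\bf V}_j]_1[\bar{\bf V}_j]_1^H$, for which $\tilde{\bf Q}_j$ has a single nonzero entry (equal to $P$) in the $(1,1)$ position; this ${\bf Q}_j$ has rank one. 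Carrying this out independently for $j=1$ and $j=2$ gives the claimed optimal pair, with maximum total harvested energy $P(\bar\sigma_{1,1}^2+\bar\sigma_{2,1}^2)$.

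I do not anticipate a real obstacle; the only point needing a word of care is uniqueness. The stated rank-one ${\bf Q}_j$ always attains the bound above; if $\bar\sigma_{j,1}>\bar\sigma_{j,2}$ it is in fact the unique maximizer, whereas if the top singular value is repeated then every covariance supported on the corresponding eigenspace is optimal (and a rank-one one still exists), so the proposition should be read as asserting the existence of an optimal ${\bf Q}_j$ of the stated rank-one form. I would also remark that the noise term neglected in (\ref{Sys_3}) merely adds the constant $2M$ to the objective and hence does not change the optimizer, so the same ${\bf Q}_j$ maximizes the exact harvested energy as well.
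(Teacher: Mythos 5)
Your proof is correct and follows essentially the same route as the paper: both first use the cyclic property of the trace to decouple (P2) into two independent problems of maximizing $tr(\bar{\bf H}_j^H\bar{\bf H}_j{\bf Q}_j)$ under the trace constraint, and then bound this linear functional by $P\bar\sigma_{j,1}^2$ with equality at the stated rank-one matrix. The only cosmetic difference is that the paper diagonalizes ${\bf Q}_j$ in its own eigenbasis and bounds $\|\bar{\bf H}_j[{\bf V}_j]_m\|^2\le\bar\sigma_{j,1}^2$, whereas you rotate into the eigenbasis of $\bar{\bf H}_j^H\bar{\bf H}_j$; your added remarks on uniqueness and on the neglected noise constant are sound but not needed for the claim.
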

\begin{proof}
From (\ref{Sys_4}),
\begin{eqnarray}\label{twoEH_2}
\sum_{i=1}^{2} E_{i} &=&\sum_{i=1}^{2} tr\left( \sum_{j=1}^2{\bf
H}_{ij}{\bf Q}_j{\bf H}_{ij}^H \right)\nonumber\\
&=&\sum_{j=1}^{2} tr\left( \sum_{i=1}^2{\bf H}_{ij}{\bf Q}_j{\bf
H}_{ij}^H \right)\nonumber\\
&=&\sum_{j=1}^{2} tr\left(  \bar{\bf H}_{j}{\bf Q}_j \bar{\bf
H}_{j}^H \right)
\end{eqnarray}
Note that the covariance matrix ${\bf Q}_j$ can be written as
${\bf Q}_j = {\bf V}_j {\bf D}_j^2 {\bf V}_j^H$ where ${\bf V}_j$
is a $M \times M$ unitary matrix and ${\bf D}_j^2 = diag\{
d_{j,1}^2,...,d_{j,M}^2 \}$ with $\sum_{m=1}^{M}d_{j,m}^2 \leq P
$. Because $tr({\bf A}{\bf B}) = tr({\bf B}{\bf A})$ for ${\bf
A}\in \mathbb{C}^{m\times n}$ and ${\bf B}\in \mathbb{C}^{n\times
m}$, (\ref{twoEH_2}) can be rewritten as
\begin{eqnarray}\label{twoEH3}
\sum_{i=1}^{2} E_{i} =\sum_{j=1}^{2} tr\left({\bf D}_j^2  {\bf
V}_j^H   \bar{\bf H}_{j}^H\bar{\bf H}_{j}{\bf V}_j \right) =
\sum_{j=1}^{2} \sum_{m=1}^{M} d_{j,m}^2 \|\bar{\bf H}_{j}[{\bf
V}_{j}]_m\|^2.
\end{eqnarray}
Because $\sum_{m=1}^{M}d_{j,m}^2 \leq P $,
\begin{eqnarray}\label{twoEH3_1}
\sum_{m=1}^{M} d_{j,m}^2 \|\bar{\bf H}_{j}[{\bf V}_{j}]_m\|^2 \leq
P \underset{m=1,...M}{\max} \|\bar{\bf H}_{j}[{\bf V}_{j}]_m\|^2.
\end{eqnarray}
Here, the equality holds when $d_{j,m'}^2 = P$ for $m' = \arg
\underset{m=1,...M}{\max} \|\bar{\bf H}_{j}[{\bf V}_{j}]_m\|^2$
and $d_{j,m}^2 = 0$ for $m\neq m'$, which implies that ${\bf Q}_j$
has a rank equal to one and accordingly, it is given as ${\bf
Q}_j= P [{\bf V}_{j}]_{m'}[{\bf V}_{j}]_{m'}^H$. Note that
\begin{eqnarray}\label{twoEH3_2}
\|\bar{\bf H}_{j}[{\bf V}_{j}]_{m'}\|^2 \leq \bar\sigma_{j, 1 }^2,
\end{eqnarray}
where the equality holds when $[{\bf V}_{j}]_{m'} = [\bar{\bf
V}_{j}]_1$. Therefore, from (\ref{twoEH3_1}) and (\ref{twoEH3_2}),
(\ref{twoEH3}) is bounded as
\begin{eqnarray}\label{twoEH4}
\!\sum_{i=1}^{2} E_{i} \!=\! \sum_{j=1}^{2} \sum_{m=1}^{M}
d_{j,m}^2 \|\bar{\bf H}_{j}[\bar{\bf V}_{j}]_m\|^2\!\leq P(
\bar\sigma_{1, 1 }^2 +\bar\sigma_{2, 1 }^2 ),\nonumber
\end{eqnarray}
and the equality holds when ${\bf Q}_j = P [\bar{\bf
V}_{j}]_1[\bar{\bf V}_{j}]_1^H$.
\end{proof}
Note that each transmitter can design the transmit covariance
matrix ${\bf Q}_j$ such that the transferred energy from each
transmitter is maximized without considering other transmitter's
channel information and transmission strategy. That is, thanks to
the energy conservation law, each transmitter transfers the energy
through its links independently.

From Proposition \ref{prop1}, the transmit signal on each
transmitter can be designed as ${\bf x}_j= \sqrt{P}[\bar{\bf
V}_{j}]_1{\bf s}_j$, where ${\bf s}_j$ is any random signal with
zero mean and unit variance. Because both receivers harvest the
energy and are not able to decode the information, the achievable
rate becomes zero.

\section{One ID receiver and One EH receiver}\label{sec:oneIDoneEH}

In this section, without loss of generality, we will consider
($EH_1$, $ID_2$) - the first receiver harvests the energy and the
second decodes information. The transmission strategy described
below can also be applied to ($ID_1$, $EH_2$) without difficulty.
Note that energy harvesting and information transfer occur
simultaneously in the IFC, and accordingly, the achievable
rate-energy region is not trivial compared to the scenarios
($EH_1$, $EH_2$) and ($ID_1$, $ID_2$).

\subsection{A necessary condition for the optimal transmission strategy}\label{ssec:optimal_tx_strat}

Because information decoding is done only at the second receiver,
by letting $R= R_2$ and $E = E_1 =E_{11}+E_{12}$, we can define
the achievable rate-energy region as:
\begin{eqnarray}\label{oneIDoneEH_1}
\!\!&\!C_{R\!-\!E} (P) \!\triangleq  \!\Biggl\{ \!(R, E) : R \leq
\log \det({\bf I}_{M} + {\bf H}_{22}^H{\bf R}_{-2}^{-1}{\bf
H}_{22}{\bf Q}_2 ),\!&\!\nonumber\\
\!\!&\!E \!\leq \!\sum_{\!j\!=\!1}^{\!2} tr ({\bf H}_{1j} {\bf
Q}_j {\bf H}_{1j}^H), tr({\bf Q}_j)\!\leq \!P, {\bf Q}_j\!\succeq
\!{\bf 0}, j\!=\!1,\!2\! \Biggr\}\!.\!&\!
\end{eqnarray}
Here, because EH and ID operations in the IFC interact with each
other, the boundary of the rate-energy region is not easily
characterized and is so far unknown.
The following lemma gives a useful insight into the
derivation of the optimal boundary.

\begin{lem}\label{lem1} For ${\bf H}_{11}$ and ${\bf H}_{21}$,
there always exists an invertible matrix ${\bf T}\in
\mathbb{C}^{M\times M}$ such that
\begin{eqnarray}\label{oneIDoneEH2}
{\bf U}_{G}^H{\bf H}_{11}{\bf T} = {\bf \Sigma}_G\nonumber \\
{\bf V}_{G}^H{\bf H}_{21}{\bf T} = {\bf I}_{M},
\end{eqnarray}
where ${\bf U}_{G}$ and ${\bf V}_{G}$ are unitary and ${\bf
\Sigma}_G$ is a diagonal matrix with $ \sigma_{G,1} \geq
\sigma_{G,2} \geq ,...,\geq \sigma_{G,M}\geq 0$.
\end{lem}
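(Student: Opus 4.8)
The plan is to recognize this as a simultaneous diagonalization (generalized SVD-type) statement and to construct $\mathbf{T}$ explicitly. The second equation $\mathbf{V}_G^H \mathbf{H}_{21}\mathbf{T} = \mathbf{I}_M$ is essentially a normalization: since $\mathbf{H}_{21}$ is invertible (full rank, square), I can try $\mathbf{T} = \mathbf{H}_{21}^{-1}\mathbf{V}_G$ for any unitary $\mathbf{V}_G$ I like, which makes the second line hold automatically. The remaining freedom in the unitary $\mathbf{V}_G$ is then exactly what I need to make the first line hold. Substituting, the first line becomes $\mathbf{U}_G^H \mathbf{H}_{11}\mathbf{H}_{21}^{-1}\mathbf{V}_G = \mathbf{\Sigma}_G$, i.e. I need $\mathbf{H}_{11}\mathbf{H}_{21}^{-1} = \mathbf{U}_G \mathbf{\Sigma}_G \mathbf{V}_G^H$. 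But that is nothing more than the ordinary singular value decomposition of the matrix $\mathbf{G} \triangleq \mathbf{H}_{11}\mathbf{H}_{21}^{-1}$, which exists for any matrix.

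So the key steps, in order, are: (1) set $\mathbf{G} \triangleq \mathbf{H}_{11}\mathbf{H}_{21}^{-1}$, which is well-defined since $\mathbf{H}_{21}$ has full rank and is square; (2) take its SVD $\mathbf{G} = \mathbf{U}_G \mathbf{\Sigma}_G \mathbf{V}_G^H$ with $\mathbf{U}_G,\mathbf{V}_G$ unitary and $\mathbf{\Sigma}_G = \mathrm{diag}\{\sigma_{G,1},\dots,\sigma_{G,M}\}$ having nonincreasing nonnegative entries; (3) define $\mathbf{T} \triangleq \mathbf{H}_{21}^{-1}\mathbf{V}_G$; (4) verify invertibility of $\mathbf{T}$ as a product of invertible matrices ($\mathbf{H}_{21}^{-1}$ invertible, $\mathbf{V}_G$ unitary hence invertible); (5) check the two identities: $\mathbf{V}_G^H \mathbf{H}_{21}\mathbf{T} = \mathbf{V}_G^H \mathbf{H}_{21}\mathbf{H}_{21}^{-1}\mathbf{V}_G = \mathbf{V}_G^H\mathbf{V}_G = \mathbf{I}_M$, and $\mathbf{U}_G^H \mathbf{H}_{11}\mathbf{T} = \mathbf{U}_G^H \mathbf{H}_{11}\mathbf{H}_{21}^{-1}\mathbf{V}_G = \mathbf{U}_G^H \mathbf{G}\mathbf{V}_G = \mathbf{U}_G^H \mathbf{U}_G \mathbf{\Sigma}_G \mathbf{V}_G^H \mathbf{V}_G = \mathbf{\Sigma}_G$.

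I do not anticipate a genuine obstacle here — the statement is essentially a repackaging of the SVD once one observes that $\mathbf{H}_{21}$ being square and invertible lets one absorb it into the change of variables. The only things to be a little careful about are: making sure the full-rank hypothesis on $\mathbf{H}_{ij}$ (stated in the system model) is explicitly invoked so that $\mathbf{H}_{21}^{-1}$ exists; and noting that $\mathbf{\Sigma}_G$ inherits the ordering and nonnegativity of singular values directly from the SVD. If one wanted a construction that does not privilege $\mathbf{H}_{21}$ (e.g. for robustness to near-singular channels, or for the general non-square antenna case alluded to in the paper), one would instead invoke the generalized singular value decomposition (GSVD) of the pair $(\mathbf{H}_{11},\mathbf{H}_{21})$, which yields unitary $\mathbf{U}_G,\mathbf{V}_G$ and an invertible $\mathbf{T}$ with $\mathbf{U}_G^H\mathbf{H}_{11}\mathbf{T}$ and $\mathbf{V}_G^H\mathbf{H}_{21}\mathbf{T}$ both diagonal; here, since $\mathbf{H}_{21}$ is full-rank square, its diagonal factor can be normalized to $\mathbf{I}_M$ by folding the diagonal into $\mathbf{T}$, recovering exactly the claimed form. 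I would present the direct construction as the main proof and perhaps remark on the GSVD viewpoint.
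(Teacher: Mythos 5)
Your proof is correct, and it takes a different (and more elementary) route than the paper. You construct $\mathbf{T}$ directly from the ordinary SVD of the single matrix $\mathbf{G}=\mathbf{H}_{11}\mathbf{H}_{21}^{-1}$, setting $\mathbf{T}=\mathbf{H}_{21}^{-1}\mathbf{V}_G$; the verification of both identities and of the invertibility of $\mathbf{T}$ is immediate, and the ordering $\sigma_{G,1}\geq\cdots\geq\sigma_{G,M}\geq 0$ is inherited from the SVD. The paper instead invokes the generalized singular value decomposition of the pair $(\mathbf{H}_{11},\mathbf{H}_{21})$ as a black box, obtaining an invertible $\mathbf{T}'$ with $\mathbf{U}_G^H\mathbf{H}_{11}\mathbf{T}'=\mathbf{\Sigma}_A$ and $\mathbf{V}_G^H\mathbf{H}_{21}\mathbf{T}'=\mathbf{\Sigma}_B$ (with $\sigma_{A,i}^2+\sigma_{B,i}^2=1$), and then normalizes by setting $\mathbf{T}=\mathbf{T}'\mathbf{\Sigma}_B^{-1}$ so that $\mathbf{\Sigma}_G=\mathbf{\Sigma}_A\mathbf{\Sigma}_B^{-1}$ --- exactly the GSVD viewpoint you sketch in your closing remark. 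The two constructions produce the same object (the $\sigma_{G,i}$ are the generalized singular values $\sigma_{A,i}/\sigma_{B,i}$, i.e., the singular values of $\mathbf{H}_{11}\mathbf{H}_{21}^{-1}$), but yours needs only the standard SVD plus the full-rank hypothesis on $\mathbf{H}_{21}$, whereas the paper's phrasing connects the lemma to the GSVD machinery it reuses later for the SLER beamformer (Algorithm 3) and would degrade more gracefully if $\mathbf{H}_{21}$ were rank-deficient or non-square. Both proofs are valid; yours is self-contained, the paper's is consistent with its surrounding toolkit.
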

\begin{proof}
Because $ {\bf H}_{21}$ has a full rank, by utilizing the
generalized singular value decomposition \cite{Sadek, Paige}, we
can obtain an invertible matrix ${\bf T}'$ such that
\begin{eqnarray}\label{oneIDoneEH2_1}
{\bf U}_{G}^H{\bf H}_{11}{\bf T}' = {\bf \Sigma}_A\nonumber \\
{\bf V}_{G}^H{\bf H}_{21}{\bf T}' = {\bf \Sigma}_{B},\nonumber
\end{eqnarray}
where ${\bf U}_{G}$ and ${\bf V}_{G}$ are unitary and ${\bf
\Sigma}_A$ and ${\bf \Sigma}_B$ are diagonal matrices with $1 \geq
\sigma_{A,1} \geq \sigma_{A,2} \geq ,...,\geq \sigma_{A,M}\geq 0$
and with $0 < \sigma_{B,1} \leq \sigma_{B,2} \leq ,...,\leq
\sigma_{B,M}\leq 1$, respectively. Here, $\sigma_{A,i}^2
+\sigma_{B,i}^2 = 1$. Therefore, by setting ${\bf T} ={\bf T}'{\bf
\Sigma}_{B}^{-1}$, we can obtain (\ref{oneIDoneEH2}) with ${\bf
\Sigma}_G = {\bf \Sigma}_A{\bf \Sigma}_{B}^{-1}$.
\end{proof}
Without loss of generality, we set
\begin{eqnarray}\label{oneIDoneEH3_rev1}
{\bf Q}_{1} = {\bf T}{\bf X}{\bf X}^H {\bf T}^H,
\end{eqnarray}
where ${\bf X}\in \mathbb{C}^{M \times m}$ has the SVD as
\begin{eqnarray}\label{oneIDoneEH3}\nonumber
{\bf X} = {\bf U}_{x}{\bf \Sigma}_{x}{\bf V}_{x}^H
\end{eqnarray}
with ${\bf \Sigma}_{x} = diag\{\sigma_{x,1},...,\sigma_{x,m}\}$
and $\sigma_{x,1}\geq,...,\geq\sigma_{x,m}$. Here,
\begin{eqnarray}\label{oneIDoneEH3_1}
\sum_{i=1}^m \sigma_{x,i}^2 =P',
\end{eqnarray}
where $P'$ is a normalization constant such that $tr({\bf T}{\bf
X}{\bf X}^H {\bf T}^H) \leq P$ is satisfied. We then have the
following proposition.
\begin{prop}\label{prop2} In the high SNR regime, the optimal ${\bf Q}_1$ at the boundary
of the achievable rate-energy region has a rank one at most. That
is, $rank ({\bf Q}_1) \leq 1$.
\end{prop}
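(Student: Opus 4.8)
The plan is to exploit the change of variables introduced in Lemma~\ref{lem1} to transform the rate expression for Rx~2 and the energy expression at Rx~1 into forms in which the channel ${\bf H}_{11}$ and ${\bf H}_{21}$ act through the single diagonal matrix ${\bf \Sigma}_G$ and the identity, respectively. Writing ${\bf Q}_1 = {\bf T}{\bf X}{\bf X}^H{\bf T}^H$ with ${\bf X}$ as in (\ref{oneIDoneEH3}), the interference-plus-noise covariance at Rx~2 becomes ${\bf R}_{-2} = {\bf I}_M + {\bf H}_{21}{\bf T}{\bf X}{\bf X}^H{\bf T}^H{\bf H}_{21}^H = {\bf I}_M + {\bf V}_G{\bf X}{\bf X}^H{\bf V}_G^H$, and the energy term $E_{11} = tr({\bf H}_{11}{\bf Q}_1{\bf H}_{11}^H) = tr({\bf \Sigma}_G{\bf X}{\bf X}^H{\bf \Sigma}_G^H) = \sum_i \sigma_{G,i}^2 \|[{\bf X}^H]_i\|^2$ depends on ${\bf X}$ only through the diagonal weighting by $\sigma_{G,i}^2$. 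The key observation is that, since ${\bf V}_G$ is unitary, the rate $R = \log\det({\bf I}_M + {\bf H}_{22}^H{\bf R}_{-2}^{-1}{\bf H}_{22}{\bf Q}_2)$ is a \emph{decreasing} function of the ``interference strength'' ${\bf R}_{-2}$ in the positive-semidefinite order, hence of ${\bf X}{\bf X}^H$; and the energy $E_{12} = tr({\bf H}_{12}{\bf Q}_2{\bf H}_{12}^H)$ does not depend on ${\bf Q}_1$ at all.

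The core of the argument is then a Pareto-improvement / contradiction step. Suppose ${\bf Q}_1$ is on the boundary of $C_{R\text{-}E}(P)$ but has $\mathrm{rank}({\bf Q}_1) = r \geq 2$. I would show one can construct a rank-one ${\bf Q}_1'$ (of the same trace, or less) that leaves $R$ no smaller and $E$ no smaller, with at least one strictly larger, contradicting Pareto-optimality. Concretely: among the eigen-directions of ${\bf X}{\bf X}^H$, concentrate all the power $P'$ on the single direction that is ``best'' for energy transfer, i.e. the one maximizing $\sigma_{G,i}^2$-weighted energy while we simultaneously argue the rate does not drop. Here is where the high-SNR assumption enters: in the high-SNR regime the rate $R$ is dominated by $\log\det({\bf H}_{22}^H{\bf R}_{-2}^{-1}{\bf H}_{22}{\bf Q}_2)$, and since ${\bf Q}_2$ is full rank (one would argue the optimal ${\bf Q}_2$ is full rank at high SNR, or treat ${\bf Q}_2$ as fixed), $R \approx \log\det({\bf H}_{22}^H{\bf H}_{22}{\bf Q}_2) - \log\det({\bf R}_{-2})$, so maximizing $R$ for fixed energy is equivalent to \emph{minimizing} $\log\det({\bf R}_{-2}) = \log\det({\bf I}_M + {\bf X}{\bf X}^H)$ — which, for a fixed ``energy budget'' expressed through the $\sigma_{G,i}^2$-weighted sum, is achieved by putting all power on one mode (concavity of $\log\det$ / $\log(1+\cdot)$ makes spreading power strictly worse for the $\log\det$ penalty while the energy constraint is linear). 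Thus concentrating onto one eigendirection can only help the rate and can be chosen to not hurt the energy, giving the contradiction.

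The main obstacle I anticipate is making the ``does not hurt the energy'' part rigorous simultaneously with the rate argument: the direction that minimizes $\log\det({\bf I}_M + {\bf X}{\bf X}^H)$ for a given weighted-energy level need not literally be an eigendirection of the original ${\bf X}{\bf X}^H$, so the Pareto-improvement construction must optimize jointly over rank-one ${\bf X} = \sqrt{P'}\,{\bf v}$, trading off the quantity $\sigma_G$-weighted energy $\|{\bf \Sigma}_G{\bf v}\|^2$ against the interference penalty $\log(1 + P'\|{\bf V}_G{\bf v}\|^2) = \log(1+P')$ (note $\|{\bf V}_G{\bf v}\| = 1$, so a rank-one choice actually \emph{fixes} the interference eigenvalue structure to a single nonzero eigenvalue $P'$, which is the cleanest possible). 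The argument should therefore run: (i) any ${\bf Q}_1$ with $\mathrm{rank} \geq 2$ and $tr = P'$ produces ${\bf R}_{-2}$ with $\log\det({\bf R}_{-2}) = \sum_i \log(1+\sigma_{x,i}^2) $; by concavity of $t\mapsto\log(1+t)$ and $\sum \sigma_{x,i}^2 = P'$, this is minimized by a single nonzero $\sigma_{x,1}^2 = P'$; (ii) simultaneously choosing that single mode to be the top right-singular vector scaling appropriately recovers as much or more weighted energy $E_{11}$; (iii) $E_{12}$ and the ${\bf Q}_2$-dependence are untouched. Combining (i)–(iii) contradicts the assumed boundary optimality of a rank-$\geq 2$ ${\bf Q}_1$, so $\mathrm{rank}({\bf Q}_1)\leq 1$. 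I would also flag that the high-SNR approximation is used precisely to drop the additive ${\bf I}_M$ that would otherwise couple the numerator and denominator of the rate and break the clean ``minimize $\log\det {\bf R}_{-2}$'' reduction.
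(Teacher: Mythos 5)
Your proposal is correct and follows essentially the same route as the paper's proof: the change of variables from Lemma~\ref{lem1}, Sylvester's theorem to write the rate as $\log\det({\bf I}_M+{\bf H}_{21}{\bf Q}_1{\bf H}_{21}^H+{\bf H}_{22}{\bf Q}_2{\bf H}_{22}^H)-\log\det({\bf I}_M+{\bf X}{\bf X}^H)$, a high-SNR argument that decouples the first term from ${\bf Q}_1$, and the observation that $\prod_i(1+\sigma_{x,i}^2)$ under a fixed power/energy budget is minimized by concentrating on a single mode while the rank-one choice can only increase the achievable $E_{11}$. The only cosmetic difference is that the paper controls the numerator term via the eigenvalue interlacing theorem (keeping the $\kappa_i^2\varpropto P$ explicit) rather than by assuming a full-rank ${\bf Q}_2$, and it separately notes the rank-zero case $\bar E\leq tr({\bf H}_{12}{\bf Q}_2{\bf H}_{12}^H)$.
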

\begin{proof}
First, let us consider the boundary point ($\bar R$, $\bar E$) of
the achievable rate-energy, in which $\bar E \leq tr({\bf
H}_{12}{\bf Q}_2{\bf H}_{12}^H)$. Then, because the first
transmitter do not need to transmit any signals causing the
interference to the ID receiver (the second receiver), ${\bf Q}_1
= {\bf 0}$ is optimal. That is, $rank({\bf Q}_1) = 0$.

For $\bar E > tr({\bf H}_{12}{\bf Q}_2{\bf H}_{12}^H)$, let there
be ${\bf Q}_1$ with $m = rank({\bf Q}_1)>1$ which corresponds to
the boundary point ($\bar R$, $\bar E$) of the achievable
rate-energy. Then, given the harvested energy $\bar E$ (the
boundary point) and ${\bf Q}_2$, the covariance matrix ${\bf Q}_1$
exhibits
\begin{eqnarray}\label{oneIDoneEH4}
\bar R = \log \det({\bf I}_{M} + {\bf H}_{22}^H{\bf
R}_{-2}^{-1}{\bf H}_{22}{\bf Q}_2 )
\end{eqnarray}
with
\begin{eqnarray}\label{oneIDoneEH4_1}
tr({\bf H}_{11}{\bf Q}_1{\bf H}_{11}^H) = \bar E_{11},
\end{eqnarray}
where $\bar E_{11} \triangleq \bar E -tr({\bf H}_{12}{\bf Q}_2{\bf
H}_{12}^H)$. Because of Sylvester's determinant theorem
\cite{DHarville} ($\det({\bf I} +{\bf AB})= \det({\bf I} +{\bf
BA})$ ), by substituting ${\bf R}_{-2}={\bf I}_{M} + {\bf
H}_{21}{\bf Q}_1{\bf H}_{21}^H$ into (\ref{oneIDoneEH4}), we can
rewrite (\ref{oneIDoneEH4}) as
\begin{eqnarray}\label{oneIDoneEH4_2}
\bar R &=& \log \det({\bf I}_{M} + {\bf H}_{22}{\bf Q}_2{\bf
H}_{22}^H({\bf I}_{M} + {\bf H}_{21}{\bf Q}_1{\bf H}_{21}^H)^{-1}
)\nonumber \\&=& \log \det(({\bf I}_{M} + {\bf H}_{21}{\bf
Q}_1{\bf H}_{21}^H)({\bf I}_{M} + {\bf H}_{21}{\bf Q}_1{\bf
H}_{21}^H)^{-1} + {\bf H}_{22}{\bf Q}_2{\bf H}_{22}^H({\bf I}_{M}
+ {\bf H}_{21}{\bf Q}_1{\bf H}_{21}^H)^{-1} )\nonumber \\&=& \log
\det(({\bf I}_{M} + {\bf H}_{21}{\bf Q}_1{\bf H}_{21}^H)^{-1} +
({\bf H}_{21}{\bf Q}_1{\bf H}_{21}^H +{\bf H}_{22}{\bf Q}_2{\bf
H}_{22}^H)({\bf I}_{M} + {\bf H}_{21}{\bf Q}_1{\bf H}_{21}^H)^{-1}
)\nonumber \\&=& \log \det({\bf I}_{M} + {\bf H}_{21}{\bf Q}_1{\bf
H}_{21}^H + {\bf H}_{22}{\bf Q}_2{\bf H}_{22}^H ) - \log \det({\bf
I}_{M} + {\bf H}_{21}{\bf Q}_1{\bf H}_{21}^H).
\end{eqnarray}
Let us define $m_2 = rank({\bf Q}_2)$ and consider $m_2 \geq m$
without loss of generality. From Lemma 1 and
(\ref{oneIDoneEH3_rev1}), (\ref{oneIDoneEH4_2}) and
(\ref{oneIDoneEH4_1}) can be respectively rewritten as
\begin{eqnarray}\label{oneIDoneEH5_rev1}
\bar R &=&  \log \det({\bf I}_{M} + {\bf V}_{G}{\bf X}{\bf
X}^H{\bf V}_{G}^H + {\bf H}_{22}{\bf Q}_2{\bf H}_{22}^H ) - \log
\det({\bf I}_{M} + {\bf V}_{G}{\bf X}{\bf X}^H{\bf V}_{G}^H),\nonumber\\
&=&  \log \det({\bf I}_{M} +{\bf X}{\bf X}^H +  {\bf V}_{G}^H{\bf
H}_{22}{\bf Q}_2{\bf H}_{22}^H {\bf V}_{G} ) - \log \det({\bf
I}_{M} + {\bf X}{\bf X}^H),
\end{eqnarray}
and
\begin{eqnarray}\label{oneIDoneEH5_1}\nonumber
tr({\bf H}_{11}{\bf Q}_1{\bf H}_{11}^H) &=& tr({\bf H}_{11}{\bf
T}{\bf X}{\bf X}^H {\bf T}^H{\bf H}_{11}^H) = tr({\bf U}_{G}{\bf
\Sigma}_G{\bf X}{\bf X}^H {\bf \Sigma}^H{\bf U}_{G}^H)\\&=&
tr({\bf \Sigma}_G{\bf X}{\bf X}^H {\bf \Sigma}_G) =
\sum_{j=1}^{m}\sigma_{x,j}^2(\sum_{i=1}^M \sigma_{G,i}^2
|u_{x}^{(i,j)}|^2) =\bar E_{11},
\end{eqnarray}
where $u_{x}^{(i,j)}$ is the $(i,j)$th element of ${\bf U}_{x}$.
From the interlacing theorem (Theorem 3.1 in \cite{RHorn}),
(\ref{oneIDoneEH5_rev1}) can be further rewritten as
\begin{eqnarray}\label{oneIDoneEH5}
\bar R &=& \log\left({\prod_{i=1}^{m}(1+\sigma_{x,i}^2
+\kappa_{i}^2 )}\prod_{j=m+1}^{m_2}(1+\kappa_{j}^2)\right) - \log
{
\prod_{i=1}^{m}(1+\sigma_{x,i}^2)},\nonumber\\
&\approx& \log\left({\prod_{i=1}^{m}(\sigma_{x,i}^2 +\kappa_{i}^2
)}\prod_{j=m+1}^{m_2}\kappa_{j}^2\right) - \log {
\prod_{i=1}^{m}(1+ \sigma_{x,i}^2)},
\end{eqnarray}
where $\kappa_{j}^2$ is the interlaced value due to ${\bf
H}_{22}{\bf Q}_2{\bf H}_{22}^H$. That is, $\sigma_{y,m_2}^2\leq
\kappa_{j}^2 \leq \sigma_{y,1}^2$, $j=1,...,m_2$, where
$\sigma_{y,1}^2$ and $\sigma_{y,m_2}^2$ are the largest and the
smallest eigenvalues of ${\bf H}_{22}{\bf Q}_2{\bf H}_{22}^H$.
Note that the last approximation in (\ref{oneIDoneEH5}) is from
the high SNR regime (i.e., large power $P$ such that $\log (1+P)
\approx \log(P)$), where $\sigma_{y,i}^2$ for all $i=1,...,m_2$
are linearly proportional to $P$ resulting in
\begin{eqnarray}\label{oneIDoneEH5_2}
\kappa_{j}^2 \varpropto P \text{ for } j=1,..,m_2.
\end{eqnarray}
Because $ \sigma_{G,1} \geq ,...,\geq \sigma_{G,M}\geq 0$ and
\begin{eqnarray}\label{oneIDoneEH6}\nonumber
0 \leq |u_{x}^{(i,j)}|^2 \leq 1,\quad \sum_{i=1}^M
|u_{x}^{(i,j)}|^2 =1,
\end{eqnarray}
if there exists $m>1$ such that (\ref{oneIDoneEH5_1}) is satisfied
with (\ref{oneIDoneEH3_1}), we can find ${\bf Q}'_1$ with
$rank({\bf Q}'_1)=1$ satisfying (\ref{oneIDoneEH5_1}). In
addition, (\ref{oneIDoneEH5}) can be rewritten as:
\begin{eqnarray}
\bar R &\approx& \log\left(\frac{\prod_{i=1}^{m}(\sigma_{x,i}^2
+\kappa_{i}^2 )}{ \prod_{i=1}^{m}(1+
\sigma_{x,i}^2)}\prod_{j=m+1}^{m_2}\kappa_{j}^2\right)
\nonumber\\\label{oneIDoneEH7_1}
&=&\log\left(\prod_{i=1}^{m}\left( \frac{\sigma_{x,i}^2}{1+
\sigma_{x,i}^2} + \frac{\kappa_{i}^2}{1+
\sigma_{x,i}^2}\right)\prod_{j=m+1}^{m_2}\kappa_{j}^2\right)
\\\label{oneIDoneEH7_2}
&\approx&\log\left(\prod_{i=1}^{m}\left( \frac{\kappa_{i}^2}{1+
\sigma_{x,i}^2}\right)\prod_{j=m+1}^{m_2}\kappa_{j}^2\right)
\\\label{oneIDoneEH7_3}
&=&  \log\left(\frac{\prod_{i=1}^{m_2}\kappa_{i}^2 }{
\prod_{i=1}^{m}(1+\sigma_{x,i}^2)}\right).
\end{eqnarray}
The approximation in (\ref{oneIDoneEH7_2}) is from
(\ref{oneIDoneEH5_2}) with a large $P$. That is because
$\sigma_{x,i}^2$ is negligible with respect to $\kappa_i^2$ when
$\bar{E}$ is finite. From (\ref{oneIDoneEH3_1}),
$\prod_{i=1}^{m}(1+\sigma_{x,i}^2)$ in the denominator of
(\ref{oneIDoneEH7_3}) has the minimum value when $m=1$. In other
words, if ${\bf Q}_1$ with $m>1$ exhibits ($\bar R$, $\bar E$),
then we can find ${\bf Q}'_1$ with $m=1$ such that ($\bar R'$,
$\bar E$) with $\bar R'>\bar R$ in the high SNR regime, which
contradicts that the point ($\bar R$, $\bar E$) is a boundary
point.
\end{proof}

\begin{remark}\label{remark1}
Note that when the required harvested energy $\bar E$ (more
precisely, $\bar E_{11}$) is large, both $\sigma_{x,i}^2$ and
$\sigma_{y,i}^2$ are linearly proportional to $P$ resulting in
\begin{eqnarray}\label{Remark_eqn1}
\sigma_{x,i}^2,\kappa_{j}^2 \varpropto P \text{ for } i=1,...,m,
j=1,..,m_2.
\end{eqnarray}
Then, (\ref{oneIDoneEH7_1}) becomes:
\begin{eqnarray}\label{Remark_eqn2}
\bar R &\approx& \log\left(\prod_{i=1}^{m}\left(1+
\frac{\kappa_{i}^2}{
\sigma_{x,i}^2}\right)\prod_{j=m+1}^{m_2}\kappa_{j}^2\right).
\end{eqnarray}
Therefore,
\begin{eqnarray}\label{oneIDoneEH8}
\bar R & \varpropto & \log P^{m_2 - m} = (m_2 - m)\log P,
\end{eqnarray}
which implies that in the high SNR regime with large harvesting
energy $E$, the achievable rate is linearly proportional to $(m_2
- m)$. Then, we can easily find that it is maximized when $m=1$.
Note that it can be interpreted as the {\it{degree of freedom
(DOF)}} in the IFC \cite{Cadambe1}, in which by reducing the rank
of the transmit signal at the first transmitter, the DOF at the
second transceiver can be increased.
\end{remark}
\begin{remark}\label{remark2}
Intuitively, from the power transfer point of view, ${\bf Q}_1$
should be as close to the dominant eigenvector of ${\bf
H}_{11}^H{\bf H}_{11}$ as possible, which implies that the rank
one is optimal for power transfer. From the information transfer
point of view, when SNR goes to infinity, the rate maximization is
equivalent to the DOF maximization. That is, a larger rank for
${\bf Q}_1$ means that more dimensions at the second receiver will
be interfered. Therefore, a rank one for ${\bf Q}_1$ is optimal
for both information and power transfer.
\end{remark}

When each node has a single antenna ($M=1$), the scalar weight at
the $j$th transmitter can be written as $\sqrt{P_j} e^{j\theta_j}$
or simply, ${\bf Q}_j = P_j$. The achievable rate-energy region
can then be given as
\begin{eqnarray}\label{oneIDoneEH_1_rev1}
\!\!&\!C_{R\!-\!E} (P) \!\triangleq  \!\Biggl\{ \!(R, E) : R \leq
\log (1 + \frac{P_2|{h}_{22}|^2}{1+ P_1|{h}_{21}|^2} ),\!&\!\nonumber\\
\!\!&\!E \!\leq \! P_1|{h}_{11}|^2 + P_2|{h}_{12}|^2, P_j\!\leq
\!P, j\!=\!1,\!2\! \Biggr\}\!.\!&\!
\end{eqnarray}
From (\ref{oneIDoneEH_1_rev1}), we can easily find that $P_2= P$
at the boundary of the achievable rate-energy region. That is, the
second transmitter always transmits its signal with full power
$P$. Therefore, the optimal transmission strategy for $M=1$ boils
down to the power allocation problem of the first transmitter in
the IFC.

 From Proposition \ref{prop2}, when transferring the energy in
the IFC, the transmitter's optimal strategy is either a rank-one
beamforming or no transmission according to the energy transferred
from the other transmitter, which increases the harvested energy
at the corresponding EH receiver and simultaneously reduce the
interference at the other ID receiver. Even though the
identification of the optimal achievable R-E boundary is an open
problem, it can be found that the first transmitter will opt for a
rank-one beamforming scheme. Therefore, in what follows, we first
design two different rank-one beamforming schemes for the first
transmitter and identify the achievable rate-energy trade-off
curves for the two-user MIMO IFC where the rank-one beamforming
schemes are exploited.



\subsection{Rank-one Beamforming Design}\label{sec:rankoneBF}
\subsubsection{Maximum-energy beamforming (MEB)}\label{sssec:MEB}
Because the first receiver operates as an energy harvester, the
first transmitter may steer its signal to maximize the energy
transferred to the first receiver, resulting in a considerable
interference to the second receiver operating as an information
decoder.

From Proposition \ref{prop2}, the corresponding transmit
covariance matrix ${\bf Q}_1$ is then given by
\begin{eqnarray}\label{MEB}
{\bf Q}_1 = P_1 [{\bf V}_{11}]_1[{\bf V}_{11}]_1^H,
\end{eqnarray}
where ${\bf V}_{11}$ is a $M \times M$ unitary matrix obtained
from the SVD of ${\bf H}_{11}$ and $0\leq P_1 \leq P$. That is,
${\bf H}_{11} = {\bf U}_{11} {\bf \Sigma}_{11} {\bf V}_{11}^H$,
where ${\bf \Sigma}_{11} = diag\{\sigma_{11, 1 },..., \sigma_{11,
M } \}$ with $\sigma_{11, 1}\geq...\geq \sigma_{11, M }$. Here,
the energy harvested from the first transmitter is given by $P_1
\sigma_{11, 1 }^2$.

\subsubsection{Minimum-leakage beamforming (MLB)}\label{sssec:MLB} From an ID perspective
at the second receiver, the first transmitter should steer its
signal to minimize the interference power to the second receiver.
That is, from Proposition \ref{prop2}, the corresponding transmit
covariance matrix ${\bf Q}_1$ is then given by
\begin{eqnarray}\label{MLB}
{\bf Q}_1 = P_1 [{\bf V}_{21}]_M[{\bf V}_{21}]_M^H,
\end{eqnarray}
where ${\bf V}_{21}$ is a $M \times M$ unitary matrix obtained
from the SVD of ${\bf H}_{21}$ and $0\leq P_1 \leq P$. That is,
${\bf H}_{21} = {\bf U}_{21} {\bf \Sigma}_{21} {\bf V}_{21}^H$,
where ${\bf \Sigma}_{21} = diag\{\sigma_{21, 1 },..., \sigma_{21,
M } \}$ with $\sigma_{21, 1 }\geq...\geq \sigma_{21, M }$. Then,
the energy harvested from the first transmitter is given by $P_1
\|{\bf H}_{11} [{\bf V}_{21}]_M\|^2$.

%

\subsection{Achievable R-E region}\label{ssec:REregion}
Given ${\bf Q}_1$ as in either (\ref{MEB}) or (\ref{MLB}), the
achievable rate-energy region is then given as:
\begin{eqnarray}\label{oneIDoneEHmax1}
\!&\!C_{R-E} (P) =  \Biggl\{ (R, E) : R = R_2, E =E_{11} + E_{12}, \quad \!&\!\nonumber\\
\!&\!R_2 \!\leq \!\log \det({\bf I}_{M} + {\bf H}_{22}^H{\bf
R}_{-2}^{-1}{\bf H}_{22}{\bf Q}_2 ), E_{12} \!\leq \! tr ({\bf
H}_{12} {\bf Q}_2 {\bf H}_{12}^H), \!&\!\nonumber\\\!\!&\!\!
tr({\bf Q}_2)\leq P, {\bf Q}_2\succeq {\bf 0}
\Biggr\},\!\!\!&\!\!\!
\end{eqnarray}
where
\begin{eqnarray}\label{oneIDoneEHmax1_1}
E_{11} = \left\{\begin{array}{c}P_1 \sigma_{11, 1 }^2\quad  {\text{for MEB}}\\
P_1 \|{\bf H}_{11} [{\bf V}_{21}]_M\|^2\quad {\text{for
MLB}}\end{array} \right. ,
\end{eqnarray}
and
\begin{eqnarray}\label{oneIDoneEHmax1_2}
{\bf R}_{-2} = \left\{\begin{array}{c}{\bf I}_{M} + P_1 {\bf
H}_{21}[{\bf V}_{11}]_1[{\bf V}_{11}]_1^H {\bf H}_{21}^H  \quad{\text{for MEB}}\\
{\bf I}_{M} + P_1\sigma_{21, M }^2 [{\bf U}_{21}]_M [{\bf
U}_{21}]_M^H\quad {\text{for MLB}}\end{array} \right. .
\end{eqnarray}
Note that because $\sigma_{11, 1 }^2 \geq \|{\bf H}_{11} [{\bf
V}_{21}]_M\|^2$, the energy harvested by the first receiver from
the first transmitter with MEB is generally larger than that with
MLB.

Due to Sylvester's determinant theorem, $R_2$ can be derived as:
\begin{eqnarray}\label{oneIDoneEHmax2}
R_2 &=& \log \det ({\bf I}_{M} +{\bf H}_{22}^H{\bf
R}_{-2}^{-1}{\bf H}_{22}{\bf Q}_2 )\nonumber \\&=& \log \det ({\bf
I}_{M} + {\bf R}_{-2}^{-1/2}{\bf H}_{22}{\bf Q}_2{\bf
H}_{22}^H{\bf R}_{-2}^{-1/2} ).
\end{eqnarray}
Accordingly, by letting $\tilde{\bf H}_{22} = {\bf
R}_{-2}^{-1/2}{\bf H}_{22}$, we have the following optimization
problem for the rate-energy region of
(\ref{oneIDoneEHmax1})\footnote{The dual problem of maximizing
energy subject to rate constraint can be formulated, but the rate
maximization problem is preferred because it can be solved using
approaches similar to those in the rate maximization problems
under various constraints \cite{Scutari, SBoyd, Zhang1}.}
\begin{eqnarray}\label{oneIDoneEHmax3}
\!\!(\!P3\!)\! \underset{{\bf Q}_2}{\text{ maximize}}& \log \det
({\bf I}_{M} + \tilde{\bf H}_{22}{\bf Q}_2\tilde{\bf H}_{22}^H
)\\\label{oneIDoneEHmax3_1} \!\!{\text{subject to}}\!&\!\!tr({\bf
H}_{12}{\bf Q}_2{\bf H}_{12}^H) \geq \max(\bar E \!-\!
E_{11},0)\!\\\label{oneIDoneEHmax3_2} &tr({\bf Q}_2) \leq P,~{\bf
Q}_2 \succeq{\bf 0},
\end{eqnarray}
where $\bar E$ can take any value less than $E_{\max}$ denoting
the maximum energy transferred from both transmitters. Here, it
can be easily derived that $E_{\max}$ is given as
\begin{eqnarray}\label{oneIDoneEHmax4}
E_{\max} = \left\{\begin{array}{c} P (\sigma_{11, 1 }^2 +
\sigma_{12, 1 }^2)\quad {\text{for MEB}}\\P (\|{\bf H}_{11} [{\bf
V}_{21}]_M\|^2 + \sigma_{12, 1 }^2)\quad{\text{for MLB}}
\end{array}\right. ,
\end{eqnarray}
where $\sigma_{12, 1 }^2$ is the largest singular value of ${\bf
H}_{12}$ and it is achieved when the second transmitter also
steers its signal such that its beamforming energy is maximized on
the cross-link channel ${\bf H}_{12}$. That is,
\begin{eqnarray}\label{oneIDoneEHmax4_1}
{\bf Q}_2 = P [{\bf V}_{12}]_1[{\bf V}_{12}]_1^H.
\end{eqnarray}
Note that the corresponding transmit signal is given by $x_2(n)=
\sqrt{P}[{\bf V}_{12}]_1 s_2(n) $, where $s_2(n)$ is a random
signal with zero mean and unit variance. Therefore, when $s_2(n)$
is Gaussian randomly distributed with zero mean and unit variance,
which can be realized by using a Gaussian random code
\cite{TCover}, the achievable rate is given by $ R_2 = \log \det
({\bf I}_{M} + P\tilde{\bf H}_{22}[{\bf V}_{12}]_1 [{\bf
V}_{12}]_1^H\tilde{\bf H}_{22}^H ).$

Note that because $E_{11}$ in (\ref{oneIDoneEHmax3_1}) and
$\tilde{\bf H}_{22}$ in (\ref{oneIDoneEHmax3}) are dependent on
$P_1(\leq P)$, we identify the achievable R-E region iteratively
as:

\vspace*{10pt}Algorithm 2. {\it{\underline{Identification of the
achievable R-E region:}}}
\begin{enumerate}
\item Initialize $n=0$, $P_1^{(0)}=P$,
\begin{eqnarray}\label{ReviseAlgo2_1}
E_{11}^{(0)} = \left\{\begin{array}{c}P_1^{(0)} \sigma_{11, 1 }^2\quad  {\text{for MEB}}\\
P_1^{(0)} \|{\bf H}_{11} [{\bf V}_{21}]_M\|^2\quad {\text{for
MLB}}\end{array} \right. ,
\end{eqnarray}
and
\begin{eqnarray}\label{ReviseAlgo2_2}
{\bf R}_{-2}^{(0)} = \left\{\begin{array}{c}{\bf I}_{M} +
P_1^{(0)} {\bf
H}_{21}[{\bf V}_{11}]_1[{\bf V}_{11}]_1^H {\bf H}_{21}^H  \quad{\text{for MEB}}\\
{\bf I}_{M} + P_1^{(0)}\sigma_{21, M }^2 [{\bf U}_{21}]_M [{\bf
U}_{21}]_M^H\quad {\text{for MLB}}\end{array} \right. .
\end{eqnarray}
\item For $n=0:N_{max}$\\
~~~~~Solve the optimization problem (P3) for ${\bf Q}_2^{(n)}$ as
a function of
$E_{11}^{(n)}$ and ${\bf R}_{-2}^{(n)}$.\\
~~If $tr ({\bf H}_{12} {\bf Q}_2^{(n)} {\bf H}_{12}^H) +
E_{11}^{(n)}
>\bar E$
\begin{eqnarray}\label{eqnRevise_Algo_2_2} P_1^{(n+1)} =
max\left(\frac{\bar E - tr ({\bf H}_{12} {\bf Q}_2^{(n)} {\bf
H}_{12}^H)}{\kappa}, 0\right),
\end{eqnarray}
where $\kappa = \left\{\begin{array}{c} \sigma_{11, 1 }^2\quad  {\text{for MEB}}\\
\|{\bf H}_{11} [{\bf V}_{21}]_M\|^2\quad {\text{for
MLB}}\end{array} \right.$.\\
~~~~~Then, $P_1^{(n+1)} = min(P, P_1^{(n+1)})$ and update
$E_{11}^{(n+1)}$ and ${\bf R}_{-2}^{(n+1)}$ with $P_1^{(n+1)}$
similarly to (\ref{ReviseAlgo2_1}) and (\ref{ReviseAlgo2_2}).
\item Finally, the boundary point of the achievable R-E region is given as $(R, E) =(\log \det
({\bf I}_{M} + \tilde{\bf H}_{22}{\bf Q}_2^{(N_{max}+1)}\tilde{\bf
H}_{22}^H ),~ E_{11}^{(N_{max}+1)} + tr ({\bf H}_{12} {\bf
Q}_2^{(N_{max}+1)} {\bf H}_{12}^H)) $.
\end{enumerate}
 \vspace*{10pt}
In (\ref{eqnRevise_Algo_2_2}), if the total transferred energy
($tr ({\bf H}_{12} {\bf Q}_2^{(n)} {\bf H}_{12}^H) +
E_{11}^{(n)}$) is larger than the required harvested energy $\bar
E$, the first transmitter reduces the transmit power $P_1$ to
lower the interference to the ID receiver. In addition, if the
energy harvested by the first receiver from the second transmitter
($tr ({\bf H}_{12} {\bf Q}_2^{(n)} {\bf H}_{12}^H)$) is larger
than $\bar E$, the first transmitter does not transmit any signal.
That is, $rank({\bf Q}) =0$ as claimed in the proof of Proposition
\ref{prop2}.

To complete Algorithm 2, we now show how to solve the optimization
problem (P3) for ${\bf Q}_2^{(n)}$ in Step 2 of Algorithm 2. The
optimization problem (P3) with $E_{11}^{(n)}$ and ${\bf
R}_{-2}^{(n)}$ can be tackled with two different approaches
according to the value of $\bar E$, i.e., $0 \leq \bar E \leq
E_{11}$ and $E_{11} < \bar E \leq E_{\max}$. Note that we have
dropped the superscript of the iteration index $(n)$ for notation
simplicity. For $0 \leq \bar E \leq E_{11}$, (P3) becomes the
conventional rate maximization problem for single-user effective
MIMO channel (i.e., $\tilde{\bf H}_{22}$ ) whose solution is given
as
\begin{eqnarray}\label{oneIDoneEHmax5}
{\bf Q}_2 &=& WF(\tilde{\bf H}_{22},{\bf I}_M, P),
\end{eqnarray}
resulting in the maximum achievable rate for the given rank-one
strategy ${\bf Q}_1$. Here, the operator $WF()$ is defined in
(\ref{twoID4}).

For $E_{11} < \bar E \leq E_{\max}$, the optimization problem (P3)
can be solved by a ``water-filling-like'' approach similar to the
one appeared in the joint wireless information and energy
transmission optimization with a single transmitter \cite{Zhang1}.
That is, the Lagrangian function of ($P3$) can be written as
\begin{eqnarray}\label{oneIDoneEHmax8}\nonumber
\!&L({\bf Q}_2, \lambda, \mu) = \log \det ({\bf I}_{M} +
\tilde{\bf H}_{22}{\bf Q}_2\tilde{\bf H}_{22}^H
)\quad\quad\quad\quad\quad\quad\quad\quad\quad&
\\\nonumber&  +\lambda( tr({\bf H}_{12}{\bf Q}_2{\bf H}_{12}^H) -
(\bar E \!-\! E_1)  ) - \mu (tr({\bf Q}_2) - P),&
\end{eqnarray}
and the corresponding dual function is then given by \cite{SBoyd,
Zhang1}
\begin{eqnarray}\label{oneIDoneEHmax9}
g(\lambda, \mu) = \underset{{\bf Q}_2 \succeq{\bf 0}}{\max} L({\bf
Q}_2, \lambda, \mu).
\end{eqnarray}
Here the optimal solution $\mu'$, $\lambda'$, and ${\bf Q}_2$ can
be found through the iteration of the following steps \cite{SBoyd}
\begin{enumerate}
\item The maximization of $L({\bf Q}_2, \lambda, \mu)$ over ${\bf Q}_2$ for given $\lambda, \mu$.
\item The minimization of $g(\lambda, \mu)$ over $\lambda, \mu$ for given ${\bf Q}_2$.
\end{enumerate}
Note that, for given $\lambda, \mu$, the maximization of $L({\bf
Q}_2, \lambda, \mu)$ can be simplified as
\begin{eqnarray}\label{oneIDoneEHmax9_1}
\!&\underset{{\bf Q}_2 \succeq{\bf 0}}{\max} L({\bf Q}_2, \lambda,
\mu) = \log \det ({\bf I}_{M} + \tilde{\bf H}_{22}{\bf
Q}_2\tilde{\bf H}_{22}^H)  - tr({\bf A}{\bf Q}_2 ),&
\end{eqnarray}
where ${\bf A}= {\mu}{\bf I}_M -\lambda{\bf H}_{12}^H{\bf
H}_{12}$. Note that (\ref{oneIDoneEHmax9_1}) is the point-to-point
MIMO capacity optimization with a single weighted power constraint
and the solution is then given by \cite{SBoyd, Zhang1}
\begin{eqnarray}\label{oneIDoneEHmax6}
{\bf Q}_2 &=& {\bf A}^{-1/2}\tilde{\bf V}'_{22}\tilde{\boldsymbol
\Lambda}'\tilde{\bf V}_{22}'^H{\bf A}^{-1/2},
\end{eqnarray}
where $\tilde{\bf V}'_{22}$ is obtained from the SVD of the matrix
$\tilde{\bf H}_{22}{\bf A}^{-1/2}$, i.e., $\tilde{\bf H}_{22}{\bf
A}^{-1/2} =\tilde{\bf U}'_{22}\tilde{\boldsymbol \Sigma}'_{22}
\tilde{\bf V}_{22}'^H$. Here, $\tilde{\boldsymbol \Sigma}'_{22} =
diag\{ \tilde\sigma_{22,1}',...,\tilde\sigma_{22,M}' \}$ with $
\tilde\sigma_{22,1}' \geq...\geq \tilde\sigma_{22,M}' \geq 0$ and
$\tilde{\boldsymbol \Lambda}' = diag\{ \tilde p_1,...,\tilde p_M
\}$ with $\tilde p_i = (1-1/\tilde\sigma_{22,i}'^2)^+ $,
$i=1,...,M$. The parameters $\mu$ and $\lambda$ minimizing
$g(\lambda, \mu)$ in Step 2 can be solved by the subgradient-based
method \cite{Zhang1, XZhao}, where the the subgradient of
$g(\lambda, \mu)$ is given by $(tr({\bf H}_{12}{\bf Q}_2{\bf
H}_{12}^H) - (\bar E \!-\! E_1),  P - tr({\bf Q}_2)) $.

%
%

\section{Energy-regularized SLER-maximizing beamforming}
\label{sec:SLER_maximizing}
In Section \ref{sec:rankoneBF}, two rank-one beamforming
strategies are developed according to different aims - either
maximizing transferred energy to EH or minimizing interference
(or, leakage) to ID. Note that in \cite{Sadek, JPark}, the
maximization of the ratio of the desired signal power to leakage
of the desired signal on other users plus noise measured at the
transmitter, i.e., SLNR maximization, has been utilized in the
beamforming design in the multi-user MIMO system. Similarly, in
this section, to maximize transferred energy to EH and
simultaneously minimize the leakage to ID, we define a new
performance metric, signal-to-leakage-and-harvested energy ratio
(SLER) as
\begin{eqnarray}\label{GSVD1}
SLER =\frac{\|{\bf H}_{11}{\bf v} \|^2}{\|{\bf H}_{21}{\bf v} \|^2
+ max(\bar E -P_1\|{\bf H}_{11}\|^2 ,0)}.
\end{eqnarray}
Note that the noise power contributes to the denominator of SLNR
in the beamforming design \cite{Sadek, JPark} because the noise at
the receiver affects the detection performance degradation for
information transfer. That is, the noise power should be
considered in the computation of beamforming weights. In contrast,
the contribution of the {\it{minimum required harvested energy}}
is added in SLER of (\ref{GSVD1}), because the required harvested
energy minus the energy directly harvested from the first
transmitter is a main performance barrier of the EH receiver.
Therefore, in the energy beamforming, the required harvested
energy is considered in the computation of the beamforming
weights.\footnote{Strictly speaking, the SLER can be defined as
$SLER =\frac{\|{\bf H}_{11}{\bf v} \|^2}{\|{\bf H}_{21}{\bf v}
\|^2 + max(\bar E -\|{\bf H}_{11}{\bf v}\|^2 ,0)}$. However, for
computational simplicity, the lower bound on the required
harvested energy is added in the denominator of SLER from the fact
that $\|{\bf H}_{11}{\bf v}\|^2 \leq P_1\|{\bf H}_{11}\|^2$.}
Then, the SLER of (\ref{GSVD1}) can be rewritten as
\begin{eqnarray}\label{GSVD2}
SLER =\frac{{\bf v}^H{\bf H}_{11}^H{\bf H}_{11}{\bf v} }{{\bf
v}^H\left({\bf H}_{21}^H{\bf H}_{21} +{max({\bar E}/{P_1} -\|{\bf
H}_{11}\|^2 ,0)}{\bf I}_M \right){\bf v} }.
\end{eqnarray}
The beamforming vector ${\bf v}$ that maximizes SLER of
(\ref{GSVD2}) is then given by
\begin{eqnarray}\label{GSVD3}
{\bf v} = \frac{\sqrt{P_1}}{\|\bar{\bf v}\|} \bar{\bf v},
\end{eqnarray}
where $\bar{\bf v}$ is the generalized eigenvector associated with
the largest generalized eigenvalue of the matrix pair $({\bf
H}_{11}^H{\bf H}_{11}, {\bf H}_{21}^H{\bf H}_{21} +{max(\bar E
/{P_1} -\|{\bf H}_{11}\|^2 ,0)}{\bf I}_M) $. Here, $\bar{\bf v}$
can be efficiently computed by using a generalized singular value
decomposition (GSVD) algorithm \cite{JPark, JPark2}, which is
briefly summarized in Algorithm 3.

\vspace*{10pt}Algorithm 3. {\it{\underline{SLER maximizing
GSVD-based beamforming:}}}
\begin{enumerate}
\item Set ${\bf K}= \left[\begin{array}{c}{\bf H}_{11}\\ {\bf H}_{21}\\\sqrt{{max\left(\bar
E /{P_1} -\|{\bf H}_{11}\|^2 ,0\right)}}{\bf
I}_{M}\end{array}\right] \in \mathbb{C}^{3M\times M}$.
\item Compute QR decomposition (QRD) of ${\bf
K}\left(=\left[{\bf P}_{\alpha};{\bf P}_{\beta}  \right]{\bar{\bf
R}}\right)$, where $\left[{\bf P}_{\alpha};{\bf P}_{\beta}
\right]$ is unitary and $\bar{ \bf R}\in \mathbb{C}^{M\times M}$
is upper triangular. Here, ${\bf P}_{\alpha}\in
\mathbb{C}^{2M\times M}$.
\item Compute $\bar{ \bf V}_\alpha$ from the SVD of ${\bf
P}_\alpha$, i.e., $\bar{\bf U}_\alpha^{H} ({\bf P}_\alpha)_{1:M}
\bar{\bf V}_\alpha = {\bar{\bf \Sigma}}_\alpha$.
\item $\bar{\bf v} = \bar{\bf R}^{-1}[\bar{\bf V}_{\alpha}]_1$ and then, ${\bf v} = \frac{\sqrt{P_1}}{\|\bar{\bf v}\|}
\bar{\bf v}$.
\end{enumerate}
 \vspace*{10pt}
Here, because
\begin{eqnarray}\nonumber
\!\!{\bf K}\!= \!\left[\!\begin{array}{c}{\bf H}_{11}\\ {\bf
H}_{21}\\\sqrt{{max\left(\bar E /{P_1}-\|{\bf H}_{11}\|^2
,0\right)}}{\bf I}_{M}\end{array}\!\right] \!= \!\left[{\bf
P}_{\alpha};{\bf P}_{\beta} \right]{\bar{\bf R}}\!
\end{eqnarray}
as in \cite{JPark2}, for $P_1\|{\bf H}_{11}\|^2< \bar E$
\begin{eqnarray}\label{eqn_algo_step4}
\bar{\bf R}^{-1} = \frac{1}{\sqrt{{max\left(\bar E /{P_1}-\|{\bf
H}_{11}\|^2 ,0\right)}}} {\bf P}_{\beta},
\end{eqnarray}
which avoids a matrix inversion in Step 4 of Algorithm 3. Because
Algorithm 3 requires one QRD of an $3M\times M$ matrix (Step 2),
one SVD of an $M\times M$ matrix (Step3), and one ($M\times M$,
$M\times 1$) matrix-vector multiplication (Step 4 with
(\ref{eqn_algo_step4})), it has a slightly more computational
complexity compared to the MEB and the MLB in Section
\ref{sec:rankoneBF} that need one SVD of an $M\times M$ matrix. 

Once the beamforming vector is given as (\ref{GSVD3}), we can
obtain the R-E tradeoff curve for SLER maximization beamforming by
taking the approach described in Section \ref{ssec:REregion}.
Interestingly, from (\ref{GSVD2}), when the required harvested
energy at the EH receiver is large, the matrix in the denominator
of (\ref{GSVD1}) approaches an identity matrix multiplied by a
scalar. Accordingly, the SLER maximizing beamforming is equivalent
with the MEB in Section \ref{sssec:MEB}. That is, ${\bf v}$
becomes $\sqrt{P_1}[{\bf V}_{11}]_1$. In contrast, as the required
harvested energy becomes smaller, ${\bf v}$ is steered such that
less interference is leaked into the ID receiver to reduce the
denominator of (\ref{GSVD1}). That is, ${\bf v}$ approaches the
MLB weight vector in Section \ref{sssec:MLB}. Therefore, the
proposed SLER maximizing beamforming weighs up both metrics -
energy maximization to EH and leakage minimization to ID.


Note that the SLER value indicates how suitable a receiving mode,
($EH_1$, $ID_2$) or ($ID_1$, $EH_2$), is to the current channel.
This motivates us to propose a mode scheduling between ($EH_1$,
$ID_2$) and ($ID_1$, $EH_2$). That is, higher SLER implies that
the transmitter can transfer more energy to its associated EH
receiver incurring less interference to the ID receiver. Based on
this observation, our scheduling process can start with evaluating
for a given interference channel and $P$,
\begin{eqnarray}\label{GSVD4}
SLER^{(1)} =\underset{{\bf v}}{\max}\frac{\|{\bf H}_{11}{\bf v}
\|^2}{\|{\bf H}_{21}{\bf v} \|^2 + max(\bar E -P\|{\bf H}_{11}\|^2
,0)}
\end{eqnarray}
and
\begin{eqnarray}\label{GSVD5}
SLER^{(2)} =\underset{{\bf v}}{\max}\frac{\|{\bf H}_{22}{\bf v}
\|^2}{\|{\bf H}_{12}{\bf v} \|^2 + max(\bar E -P\|{\bf H}_{22}\|^2
,0)}.
\end{eqnarray}
If $SLER^{(1)} \geq SLER^{(2)}$, ($EH_1$, $ID_2$) is selected.
Otherwise, ($ID_1$, $EH_2$) is selected.

\section{Discussion}
\label{sec:discussion}
\subsection{The rank-one optimality in the low SNR regime for one ID receiver and one EH receiver}\label{ssec:LowSNRregime}
Even though we have assumed the high SNR regime throughout the
paper, in some applications such as wireless ad-hoc sensor
networks, low power transmissions are also considered. The
following proposition establishes the rank-one optimality in the
low SNR regime.

\begin{prop}\label{prop3} Considering ($EH_1$, $ID_2$) without loss of generality, in the low SNR regime, the optimal ${\bf Q}_1$ at the boundary
of the achievable rate-energy region has a rank one.
\end{prop}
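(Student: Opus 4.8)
The plan is to reuse the machinery of the proof of Proposition \ref{prop2}, replacing the high-SNR identity $\log(1+P)\approx\log P$ by its low-SNR counterpart $\log\det({\bf I}_M+{\bf A})\approx tr({\bf A})$, which is accurate when $P$ (hence $\|{\bf A}\|$) is small since $\log\det({\bf I}_M+{\bf A})=tr({\bf A})+O(\|{\bf A}\|^2)$. Keeping the notation of Lemma \ref{lem1} and (\ref{oneIDoneEH3_rev1}), write ${\bf Q}_1={\bf T}{\bf X}{\bf X}^H{\bf T}^H$, so that the exact rate expression (\ref{oneIDoneEH5_rev1}) reads $\bar R=\log\det({\bf I}_M+{\bf X}{\bf X}^H+{\bf V}_G^H{\bf H}_{22}{\bf Q}_2{\bf H}_{22}^H{\bf V}_G)-\log\det({\bf I}_M+{\bf X}{\bf X}^H)$. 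In the low-SNR regime both ${\bf X}{\bf X}^H$ and ${\bf H}_{22}{\bf Q}_2{\bf H}_{22}^H$ are $O(P)$, so applying the first-order expansion to each $\log\det$ term the ${\bf X}{\bf X}^H$ contributions cancel, leaving $\bar R\approx tr({\bf V}_G^H{\bf H}_{22}{\bf Q}_2{\bf H}_{22}^H{\bf V}_G)=tr({\bf H}_{22}{\bf Q}_2{\bf H}_{22}^H)$, which is independent of ${\bf Q}_1$.

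Hence, at any boundary point $(\bar R,\bar E)$, once ${\bf Q}_2$ is fixed the value $\bar R$ (to leading order) and $E_{12}=tr({\bf H}_{12}{\bf Q}_2{\bf H}_{12}^H)$ are determined, and ${\bf Q}_1$ can affect the design only through the harvested energy $E=E_{11}+E_{12}$. Maximizing $E$ over ${\bf Q}_1$ therefore reduces to $\max\{tr({\bf H}_{11}{\bf Q}_1{\bf H}_{11}^H):tr({\bf Q}_1)\le P,\ {\bf Q}_1\succeq{\bf 0}\}$, which is precisely the single-transmitter instance of $(P2)$. The argument in the proof of Proposition \ref{prop1} then applies verbatim: writing ${\bf Q}_1={\bf V}_1{\bf D}_1^2{\bf V}_1^H$ one gets $tr({\bf H}_{11}{\bf Q}_1{\bf H}_{11}^H)=\sum_m d_{1,m}^2\|{\bf H}_{11}[{\bf V}_1]_m\|^2\le P\,\sigma_{11,1}^2$, with equality iff ${\bf Q}_1=P[{\bf V}_{11}]_1[{\bf V}_{11}]_1^H$. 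Thus the optimal ${\bf Q}_1$ is the rank-one MEB covariance at full power, i.e. $rank({\bf Q}_1)=1$.

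I would close by noting why the statement can assert rank \emph{one} rather than rank \emph{at most one}: on the portion of the boundary with $\bar E\le E_{12}$, although ${\bf Q}_1={\bf 0}$ already meets the constraints, it is strictly dominated --- strictly more harvested energy at the same leading-order rate --- by the full-power MEB choice, so every boundary point carries a rank-one ${\bf Q}_1$. The delicate point is justifying the leading-order truncation, i.e. checking that the neglected $O(P^2)$ interference correction to $\bar R$ cannot overturn the conclusion. This can be reinforced directly: even retaining that term, for a fixed target $E_{11}$ the subproblem for ${\bf Q}_1$ is to minimize $tr({\bf Q}_1\,{\bf H}_{21}^H{\bf H}_{22}{\bf Q}_2{\bf H}_{22}^H{\bf H}_{21})$ subject to $tr({\bf H}_{11}{\bf Q}_1{\bf H}_{11}^H)=E_{11}$, $tr({\bf Q}_1)\le P$, ${\bf Q}_1\succeq{\bf 0}$ --- a semidefinite program with only two scalar constraints, which admits an optimal solution of rank at most one by the standard low-rank bound for SDPs; since $E_{11}>0$ on the relevant part of the boundary forces ${\bf Q}_1\ne{\bf 0}$, the rank is exactly one.
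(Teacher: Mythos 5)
Your proof is correct and follows essentially the same route as the paper's: both replace $\log\det({\bf I}+{\bf A})$ by its first-order expansion $tr({\bf A})$ at low SNR (the paper invokes the series (\ref{discuss2})), observe that the two trace terms involving ${\bf Q}_1$ cancel so that $\bar R\approx tr({\bf H}_{22}{\bf Q}_2{\bf H}_{22}^H)$ is independent of ${\bf Q}_1$, and then let ${\bf Q}_1$ maximize $tr({\bf H}_{11}{\bf Q}_1{\bf H}_{11}^H)$, which forces the full-power rank-one MEB solution as in the argument of Proposition \ref{prop1}. The only cosmetic difference is that you route the computation through the coordinates of Lemma \ref{lem1}; that transformation is not needed here, since the cancellation already occurs in the original variables, and the paper accordingly skips it. You do add two refinements the paper omits: (i) an explicit reason why the rank is exactly one rather than at most one (full-power MEB dominates ${\bf Q}_1={\bf 0}$ at leading order even on the low-energy portion of the boundary, where Proposition \ref{prop2} would have allowed rank zero); and (ii) a check that the neglected $O(P^2)$ term cannot overturn the conclusion, by noting that the residual subproblem is a trace-objective semidefinite program with two scalar constraints and hence admits a rank-one optimizer. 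Point (ii) is a genuine strengthening, since the paper's argument is valid only to first order and is silent on whether the second-order interference penalty could reintroduce higher-rank solutions. Be aware, though, that the same second-order term limits claim (i) to leading order: accounting for it, the exact maximum-rate endpoint of the region is still attained by ${\bf Q}_1={\bf 0}$, so ``rank one'' on the entire boundary should be read, as in the paper, within the low-SNR approximation.
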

\begin{proof}
Similarly to (\ref{oneIDoneEH4_2}), the achievable rate at the
$ID_2$ receiver is given by
\begin{eqnarray}\label{discuss1}
\bar R = \log \det({\bf I}_{M} + {\bf H}_{21}{\bf Q}_1{\bf
H}_{21}^H + {\bf H}_{22}{\bf Q}_2{\bf H}_{22}^H ) - \log \det({\bf
I}_{M} + {\bf H}_{21}{\bf Q}_1{\bf H}_{21}^H).
\end{eqnarray}
For a Hermitian matrix ${\bf A}$ with eigenvalues in $(-1, 1)$,
$\log \det ({\bf I}+{\bf A})$ can be extended as \cite{RHGohary}
\begin{eqnarray}\label{discuss2}
\log \det ({\bf I}+{\bf A}) = tr({\bf A}) - \frac{1}{2}tr({\bf
A}^2)+ \frac{1}{3}tr({\bf A}^3)+....
\end{eqnarray}
Because ${\bf H}_{ij}{\bf Q}_j{\bf H}_{ij}^H$ is Hermitian and
positive definite, and its maximum eigenvalue is upper-bounded as
$\lambda_{\max}({\bf H}_{ij}{\bf Q}_j{\bf
H}_{ij}^H)<\lambda_{\max}({\bf Q}_j)\lambda_{\max}({\bf
H}_{ij}{\bf H}_{ij}^H)$ \cite{RHGohary}, for sufficiently low
transmission power, their maximum eigenvalues lie in $(-1,1)$.
Accordingly, in the low SNR regime, $\bar R$ can be approximated
as
\begin{eqnarray}\label{discuss3}
\bar R &\approx& tr({\bf H}_{21}{\bf Q}_1{\bf H}_{21}^H +{\bf
H}_{22}{\bf Q}_2{\bf H}_{22}^H) -  tr({\bf H}_{21}{\bf Q}_1{\bf
H}_{21}^H)\nonumber \\ & = &  tr({\bf H}_{22}{\bf Q}_2{\bf
H}_{22}^H).
\end{eqnarray}
That is, the achievable rate is independent of the interference
from the first transmitter (noise-limited system). Then, ${\bf
Q}_1$ at the first transmitter can be designed to maximize the
harvested energy. Therefore, the optimal ${\bf Q}_1$ at the
boundary of the achievable rate-energy region is given by
\begin{eqnarray}\label{discuss4}
{\bf Q}_1 = \arg \underset{{\bf Q}}{\max}~ tr({\bf H}_{11}{\bf
Q}{\bf H}_{11}).
\end{eqnarray}
Note that $tr({\bf H}_{11}{\bf Q}{\bf H}_{11}) \leq P
\sigma_{11,1}^2 $, where the equality is satisfied when ${\bf Q} =
P [{\bf V}_{11}]_1[{\bf V}_{11}]_1^H$ as in (\ref{MEB}).
Therefore, the optimal ${\bf Q}_1$ at the boundary of the
achievable rate-energy region has a rank one.
\end{proof}

Note that, from (\ref{discuss3}), ${\bf Q}_2$ maximizing $\bar R$
is designed as
\begin{eqnarray}\label{discuss5}
{\bf Q}_2 = \arg \underset{{\bf Q}}{\max}~ tr({\bf H}_{22}{\bf
Q}{\bf H}_{22}^H)
\end{eqnarray}
and the corresponding ${\bf Q}_2$ is given by ${\bf Q}_2 = P [{\bf
V}_{22}]_1 [{\bf V}_{22}]_1^H$, where ${\bf V}_{22}$ is the right
singular matrix of ${\bf H}_{22}$. That is, at the low SNR, the
optimal information transfer strategy in the joint information and
energy transfer system is also a rank-one beamforming, which is
consistent with the result in the information transfer system
\cite{SAJafar}, where the region that the beamforming is optimal
becomes broader as the SNR decreases.

\subsection{Asymptotic behavior for a large $M$}\label{ssec:LargeM}
Note that Proposition \ref{prop2} gives us an insight on the joint
information and energy transfer with a large number of antennas
describing a promising future wireless communication structure
such as a massive MIMO system \cite{Marzetta, JHoydis,FRusek}.

Given the normalized channel ${\bf H}= \frac{\sqrt{M}}{\|\tilde
{\bf H}\|}\tilde {\bf H}$, where the elements of $\tilde{\bf H}$
are i.i.d. zero-mean complex Gaussian random variables (RVs) with
a unit variance, and ${\bf Q} = P{\bf v}{\bf v}^H$ with a finite
$P$ and $\|{\bf v}\|=1$, we define ${\bf R} = {\bf I}_M + {\bf H}
{\bf Q}{\bf H}^H$ which is analogous to ${\bf R}_{-2}$ in
(\ref{oneIDoneEH4}) of the proof of Proposition \ref{prop2}. Then,
because $\det({\bf R}) = 1+P {\bf v}^H{\bf H}^H{\bf H}{\bf v}$ and
\begin{eqnarray}\label{massiveMIMO1}
{\bf H}^H{\bf H} &\approx &\frac{1}{M}\tilde{\bf H}^H\tilde{\bf H}
~\approx~  {\bf I}_M, \quad (\text{Central limit theorem in
\cite{Papoulis}})\nonumber
\end{eqnarray}
when $M$ goes to infinity, $\det({\bf R})\approx 1+P$ and it is
independent from the beamforming vector ${\bf v}$. Analogously,
when $M$ increases, the design of ${\bf v}_1$ in ${\bf Q}_1=
P_1{\bf v}_1{\bf v}_1^H $ at the first transmitter is independent
from $\det({\bf R}_{-2})$ (accordingly, independent from ${\bf
H}_{21}$). Therefore, when nodes have a large number of antennas,
the transmit signal for energy transfer can be designed by caring
about its own link, not caring about the interference link to the
ID receiver. That is, for a large $M$, MEB with a power control
becomes optimal because it maximizes the energy transferred to its
own link.

\begin{remark}\label{remark3}
Interestingly, from Section \ref{ssec:LowSNRregime} and
\ref{ssec:LargeM} we note that, when the SNR decreases or the
number of antennas increases, the energy transfer strategy in the
MIMO IFC would be designed by only caring about its own link to
the EH receiver, not by considering the interference or leakage
through the other link to the ID receiver. In addition, massive
MIMO effect makes the joint information and energy transfer in the
MIMO IFC naturally split into disjoint information and energy
transfer in two non-interfering links.
\end{remark}

%
%
%
%
%

\section{Simulation Results}
\label{sec:simulation}

Computer simulations have been performed to evaluate the R-E
tradeoff of various transmission strategies in the two-user MIMO
IFC. In the simulations, the normalized channel ${\bf H}_{ij}$ is
generated such as ${\bf H}_{ij}=
\frac{\sqrt{\alpha_{ij}M}}{\|\tilde{\bf H}_{ij}\|_F}\tilde{\bf
H}_{ij}$, where the elements of $\tilde{\bf H}_{ij}$ are
independent and identically distributed (i.i.d.) zero-mean complex
Gaussian random variables (RVs) with a unit variance. The maximum
transmit power is set as $P=50W$, unless otherwise stated.

Table \ref{table_singlemode} lists the achievable rate and energy,
($R$, $E$), for single modes -- ($EH_1$, $EH_2$) and ($ID_1$,
$ID_2$). The harvested energy of ($EH_1$, $EH_2$) for $M=4$ is
larger than that for $M=2$ and furthermore, the achievable rate of
($ID_1$, $ID_2$) for $M=4$ is higher than that for $M=2$. Note
that the achievable rate of ($EH_1$, $EH_2$) and the harvested
energy of ($ID_1$, $ID_2$) are zero.

{\renewcommand\baselinestretch{1.2}{
\begin{table}
\renewcommand{\arraystretch}{1.2}
\caption{The achievable rate and energy, ($R$, $E$), for single
modes when $M \in \{2, 4\}$} \label{table_singlemode}
\begin{center}
\begin{tabular}{|c|c|c|}\hline
Mode & $M=2$  &  $M=4$ \\
\hline ($EH_1$, $EH_2$) &  (0, 262.98) & (0, 359.57) \\
\hline ($ID_1$, $ID_2$) &  (9.67, 0)  & (16.08, 0) \\
\hline
\end{tabular}
\end{center}
\end{table}
}}

 Fig. \ref{R_E_tradeoff_M4} shows R-E tradeoff curves for the
MEB and the MLB described in Section \ref{sec:rankoneBF} when $M =
4$, $\alpha_{ii} = 1$ for $i = 1,2$, and $\alpha_{ij} = 0.8$ for
$i \neq j$. The first transmitter takes a rank-one beamforming,
either MEB or MLB, and the second transmitter designs its transmit
signal as (\ref{oneIDoneEHmax4_1}), (\ref{oneIDoneEHmax5}), and
(\ref{oneIDoneEHmax6}), described in
Section \ref{ssec:REregion}. 
As expected, the MEB strategy raises the harvested energy at the
EH receiver, while the MLB increases the achievable rate at the ID
receiver. Interestingly, in the regions where the energy is less
than a certain threshold around 45 Joule/sec, the first
transmitter does not transmit any signals to reduce the
interference to the second ID receiver. That is, the energy
transferred from the second transmitter is sufficient to satisfy
the energy constraint at the EH receiver.

The dashed lines indicate the R-E curves of the time-sharing of
the full-power rank-one beamforming (either MEB or MLB) and the no
transmission at the first transmitter. Here, the second
transmitter switches between the beamforming on ${\bf H}_{12}$ as
(\ref{oneIDoneEHmax4_1}) and the water-filling as
(\ref{oneIDoneEHmax5}) in the corresponding time slots.
For MLB, ``water-filling-like'' approach (\ref{oneIDoneEHmax6})
exhibits higher R-E performance than the time-sharing scheme.
However, for MEB, when the energy is less than 120 Joule/sec, the
time-sharing exhibits better performance than the approach
(\ref{oneIDoneEHmax6}). That is, because the MEB causes large
interference to the ID receiver, it is desirable that, for the low
required harvested energy, the first transmitter turns off its
power in the time slots where the second transmitter is assigned
to exploit the
water-filling method 
as (\ref{oneIDoneEHmax5}). Instead, in the remaining time slots,
the first transmitter opts for a MEB with full power and the
second transmitter transfers its information to the ID receiver by
steering its beam on EH receiver's channel ${\bf H}_{12}$ as
(\ref{oneIDoneEHmax4_1}) to help the EH operation.
In Fig. \ref{R_E_tradeoff_M4_ranktwo}, we have additionally
included the R-E tradeoff curves for MEB with $rank({\bf Q}_1)=2$
when the simulation parameters are the same as those of Fig.
\ref{R_E_tradeoff_M4}. Here, we can find that the MEB with $rank
({\bf Q}_1)=1$ has superior R-E boundary points compared to that
with $rank ({\bf Q}_1)=2$. That is, even though we have not
identified the exact optimal R-E boundary, for a given beamforming
(MEB in Fig. 3), we can find that the beamforming with $rank ({\bf
Q}_1)=1$ has superior R-E boundary points compared to that with
$rank ({\bf Q}_1)=2$.

\begin{figure}
\begin{center}
\begin{tabular}{c}
\includegraphics[height=10cm]{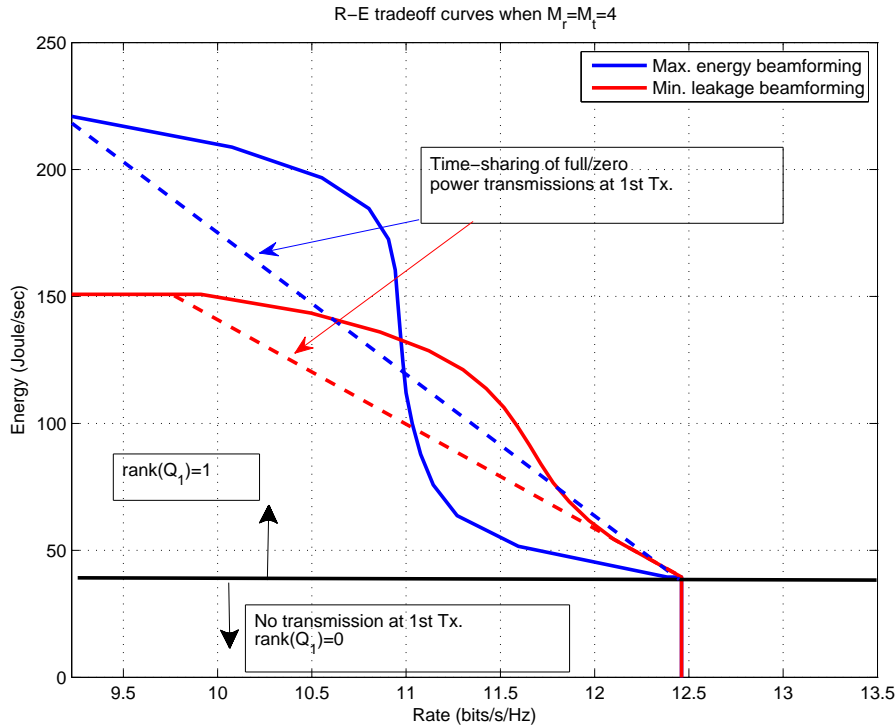} 
\end{tabular}
\end{center}
\caption[R_E_tradeoff]
{ \label{R_E_tradeoff_M4} R-E tradeoff curves for MEB and MLB when
$M = 4$, $\alpha_{ii} = 1$ for $i = 1,2$, and $\alpha_{ij} = 0.8$
for $i \neq j$.}
\end{figure}

\begin{figure}
\begin{center}
\begin{tabular}{c}
\includegraphics[height=10cm]{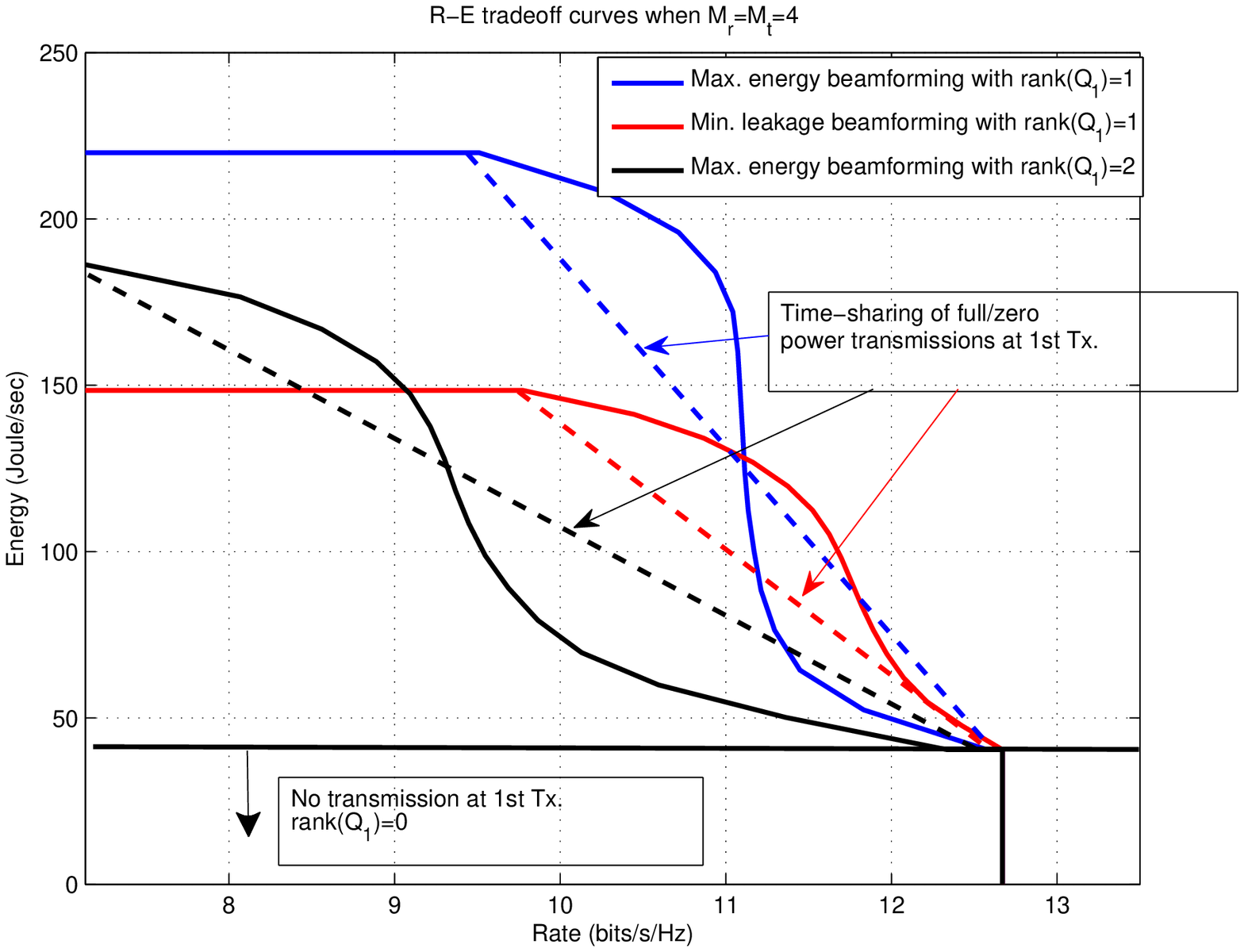} 
\end{tabular}
\end{center}
\caption[R_E_tradeoff]
{ \label{R_E_tradeoff_M4_ranktwo} R-E tradeoff curves for MEBs
($rank({\bf Q}_1) = \{1, 2\}$) and MLB ($rank({\bf Q}_1) = 1$)
when $M = 4$, $\alpha_{ii} = 1$ for $i = 1,2$, and $\alpha_{ij} =
0.8$ for $i \neq j$.}
\end{figure}

In Fig. \ref{R_E_tradeoff_M_4_lowSNR}, we plot R-E tradeoff curves
for $M=4$ and $P=0.1$. As observed in Section
\ref{ssec:LowSNRregime}, at the low SNR, the MEB exhibits higher
harvested energy than the MLB without any degradations in the
achievable rate. Fig. \ref{R_E_tradeoff_M15} shows R-E tradeoff
curves for $M = 15$. Compared to $M\in 4$ (Fig.
\ref{R_E_tradeoff_M4}), the gap between the achievable rates of
MEB and MLB is relatively less apparent. As pointed out in Remark
\ref{remark3} of Section \ref{sec:discussion}, for low SNRs or
large numbers of antennas in the MIMO IFC, the energy transfer
strategy of maximizing the transferred energy on its own link
exhibits wider R-E region than that of minimizing the interference
to the other ID receiver.

%
%

\begin{figure}
\begin{center}
\begin{tabular}{c}
\includegraphics[height=10cm]{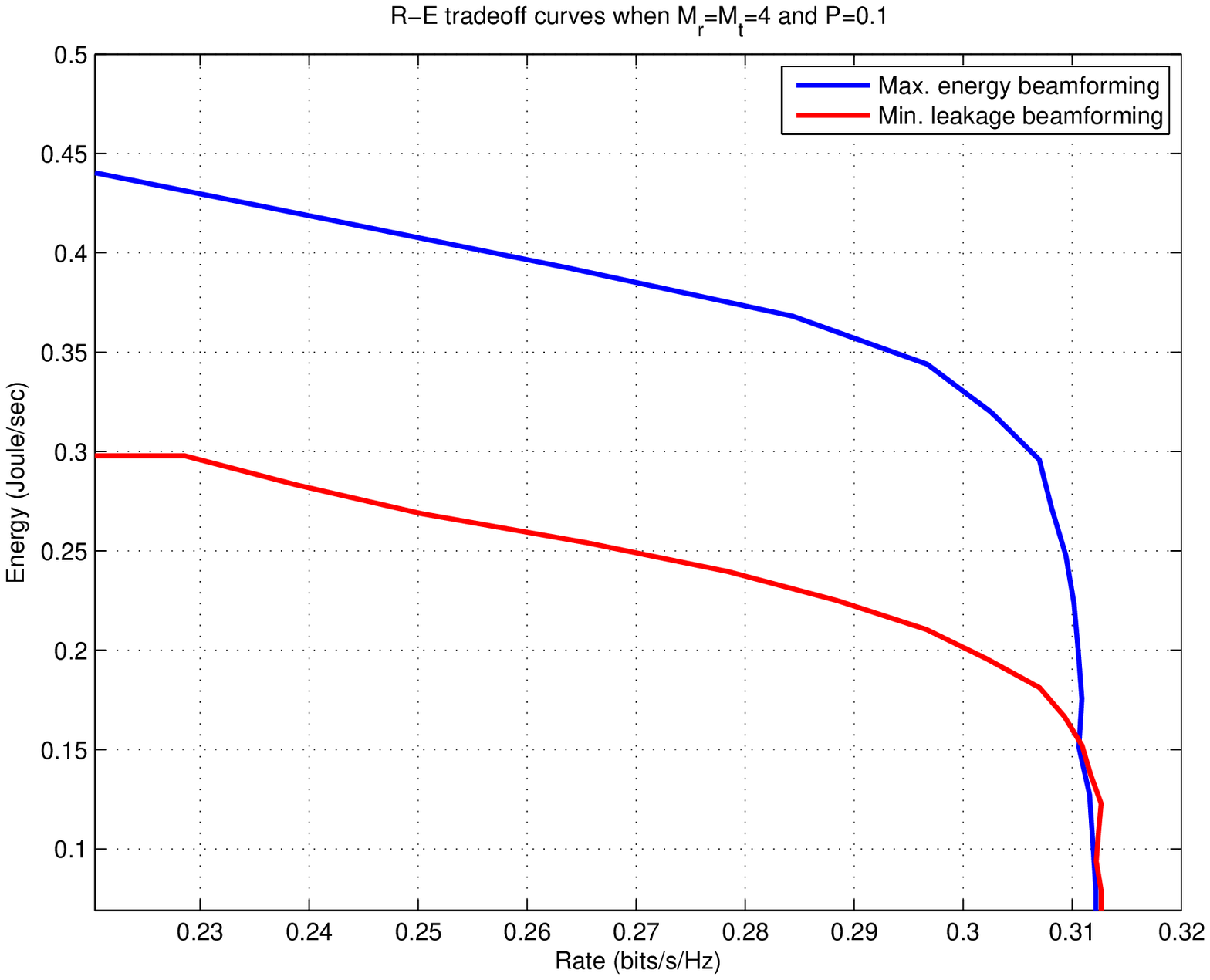} 
\end{tabular}
\end{center}
\caption[R_E_tradeoff]
{ \label{R_E_tradeoff_M_4_lowSNR} R-E tradeoff curves for MEB and
MLB when $M = 4$ and $P = 0.1$.}
\end{figure}

\begin{figure}
\begin{center}
\begin{tabular}{c}
\includegraphics[height=10cm]{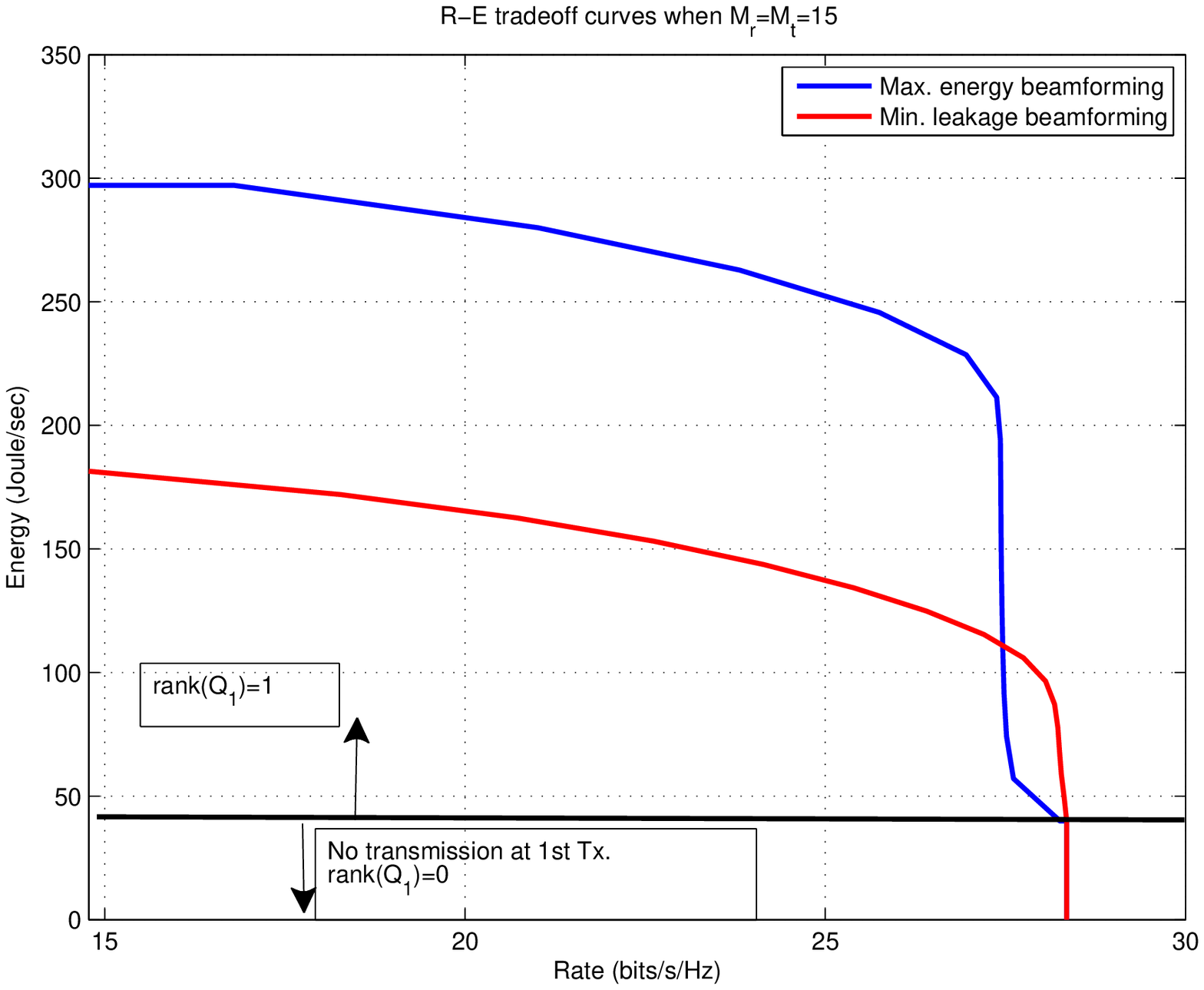} 
\end{tabular}
\end{center}
\caption[R_E_tradeoff]
{ \label{R_E_tradeoff_M15} R-E tradeoff curves for MEB and MLB
when $M = 15$.}
\end{figure}

 Fig. \ref{R_E_tradeoff_M_4_GSVD} shows R-E tradeoff curves for
MEB, MLB, SLNR maximizing beamforming, and SLER maximizing
beamforming when $M = 4$, $\alpha_{ii} = 1$ for $i = 1,2$, and
$\alpha_{ij} = 0.8$ for $i \neq j$. The R-E region of the proposed
SLER maximizing beamforming covers most of those of both MEB and
MLB, while the SLNR beamforming does not cover the region for MEB.
Fig. \ref{R_E_tradeoff_M_4_GSVD_asymmetric} shows R-E tradeoff
curves for an asymmetric case $M_t=3$ and $M_r= 4$. We can find a
similar trend with $M_t=4=M_r= 4$, but the overall harvested
energy with $M_t=3$ and $M_r= 4$ is slightly less than that with
$M_t=4=M_r= 4$.

\begin{figure}
\begin{center}
\begin{tabular}{c}
\includegraphics[height=10cm]{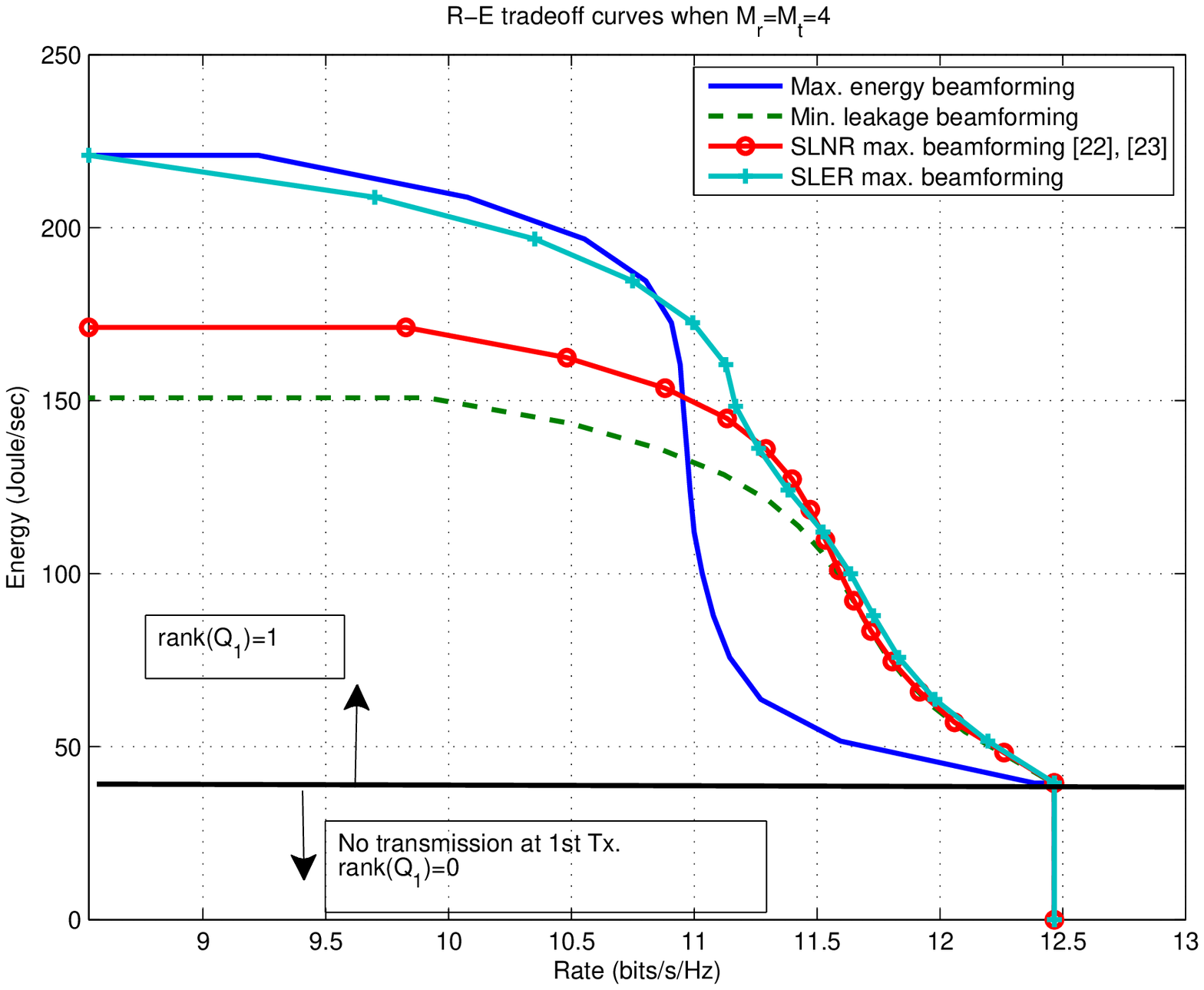}
\end{tabular}
\end{center}
\caption[R_E_tradeoff_M_4_GSVD]
{ \label{R_E_tradeoff_M_4_GSVD} R-E tradeoff curves for MEB, the
MLB, SLNR maximizing beamforming, and SLER maximizing beamforming
when $M = 4$, $\alpha_{ii} = 1$ for $i = 1,2$, and $\alpha_{ij} =
0.8$ for $i \neq j$.}
\end{figure}

\begin{figure}
\begin{center}
\begin{tabular}{c}
\includegraphics[height=10cm]{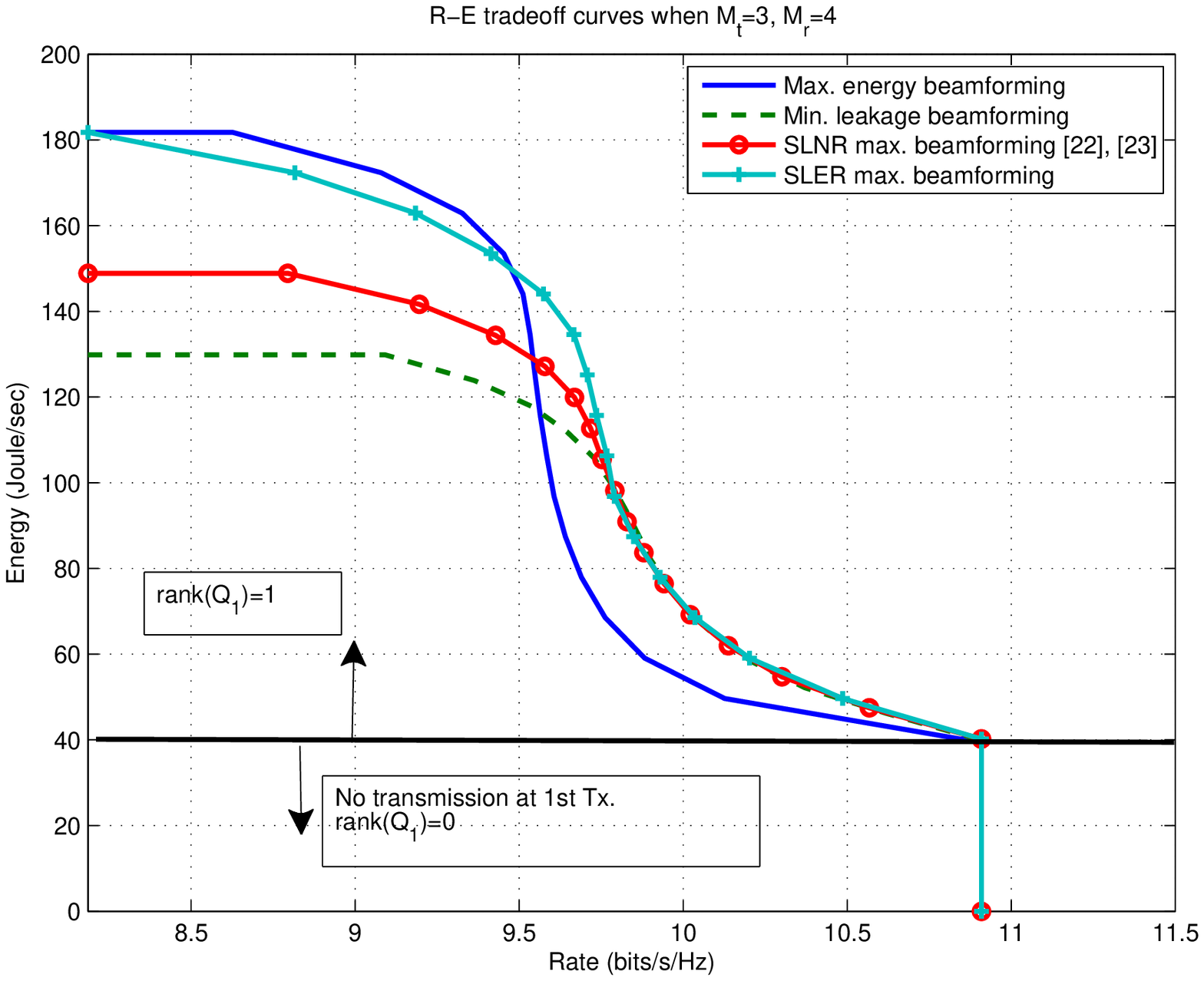}
\end{tabular}
\end{center}
\caption[R_E_tradeoff_M_4_GSVD]
{ \label{R_E_tradeoff_M_4_GSVD_asymmetric} R-E tradeoff curves for
MEB, the MLB, SLNR maximizing beamforming, and SLER maximizing
beamforming when $M_t = 3$ and $M_r = 4$.}
\end{figure}

Fig. \ref{R_E_tradeoff_M_2_GSVD_scheduling} shows the R-E tradeoff
curves for SLER maximizing beamforming with/without SLER-based
scheduling described in Section \ref{sec:SLER_maximizing} when (a)
$\alpha_{ij}= 0.7$ and (b) $\alpha_{ij}= 1$ for $i\neq j$. Here,
we set $\alpha_{ii} =1$ for $i = 1,2$ and $M = 2$. Note that the
case with $\alpha_{ij} = 0.7 $ has weaker cross-link channel
(inducing less interference) than that with $\alpha_{ij} = 1 $.
The SLER-based scheduling extends the achievable R-E region for
both $\alpha_{ij} \in \{0.7,1\}$, but the improvement for
$\alpha_{ij}=1$ is slightly more apparent. That is, the SLER-based
scheduling becomes more effective when strong interference exists
in the system. Note that the case with $\alpha_{ij} = 1 $ exhibits
slightly lower achievable rate than that with $\alpha_{ij}=0.7$,
while achieving larger harvested energy. That is, the strong
interference degrades the information decoding performance but it
can be effectively utilized in the energy-harvesting.

%

\begin{figure}
\centering 
 \subfigure[]
  {\includegraphics[height=6cm]{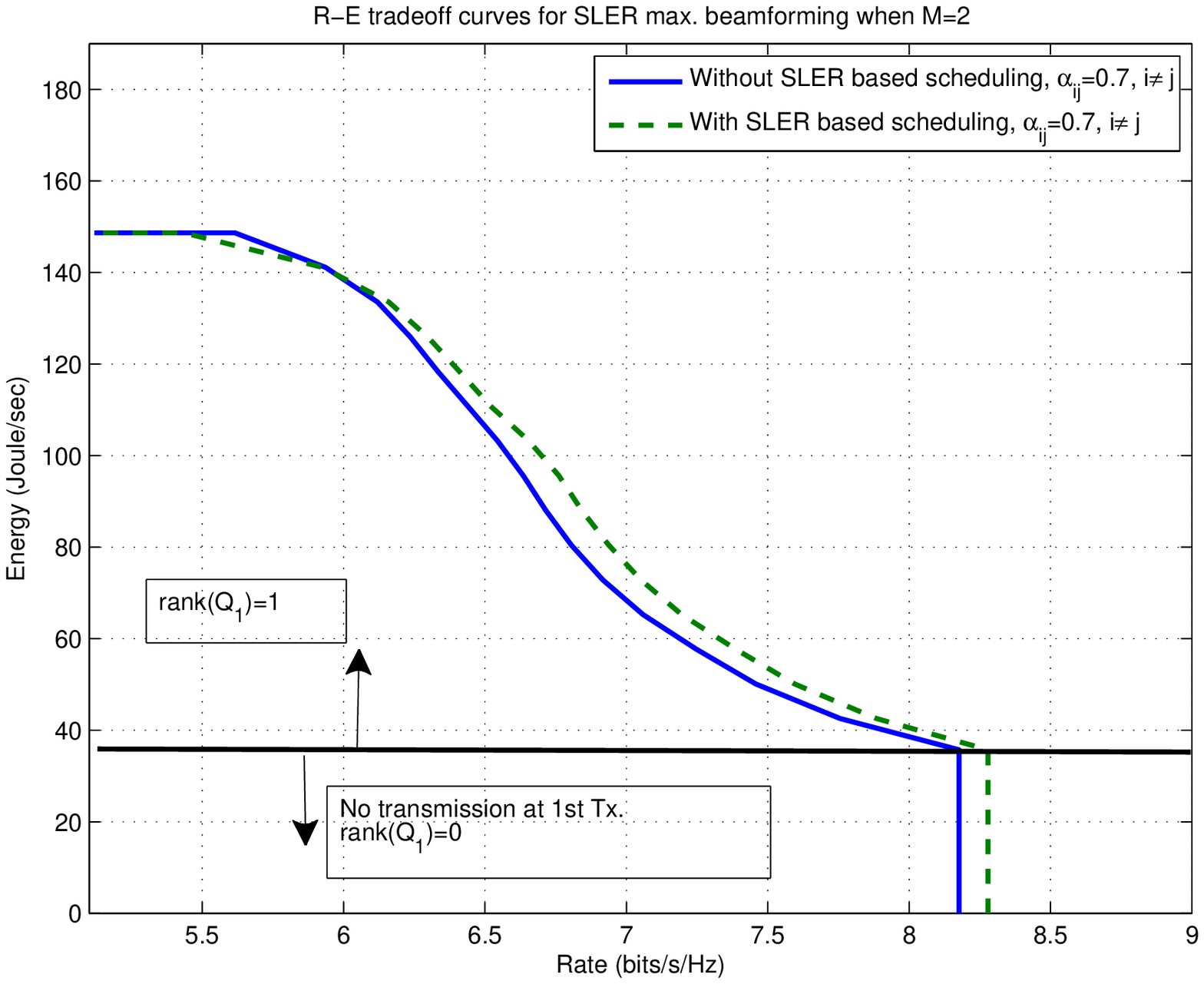}}\hspace{3em}
 \subfigure[]
  {\includegraphics[height=6cm]{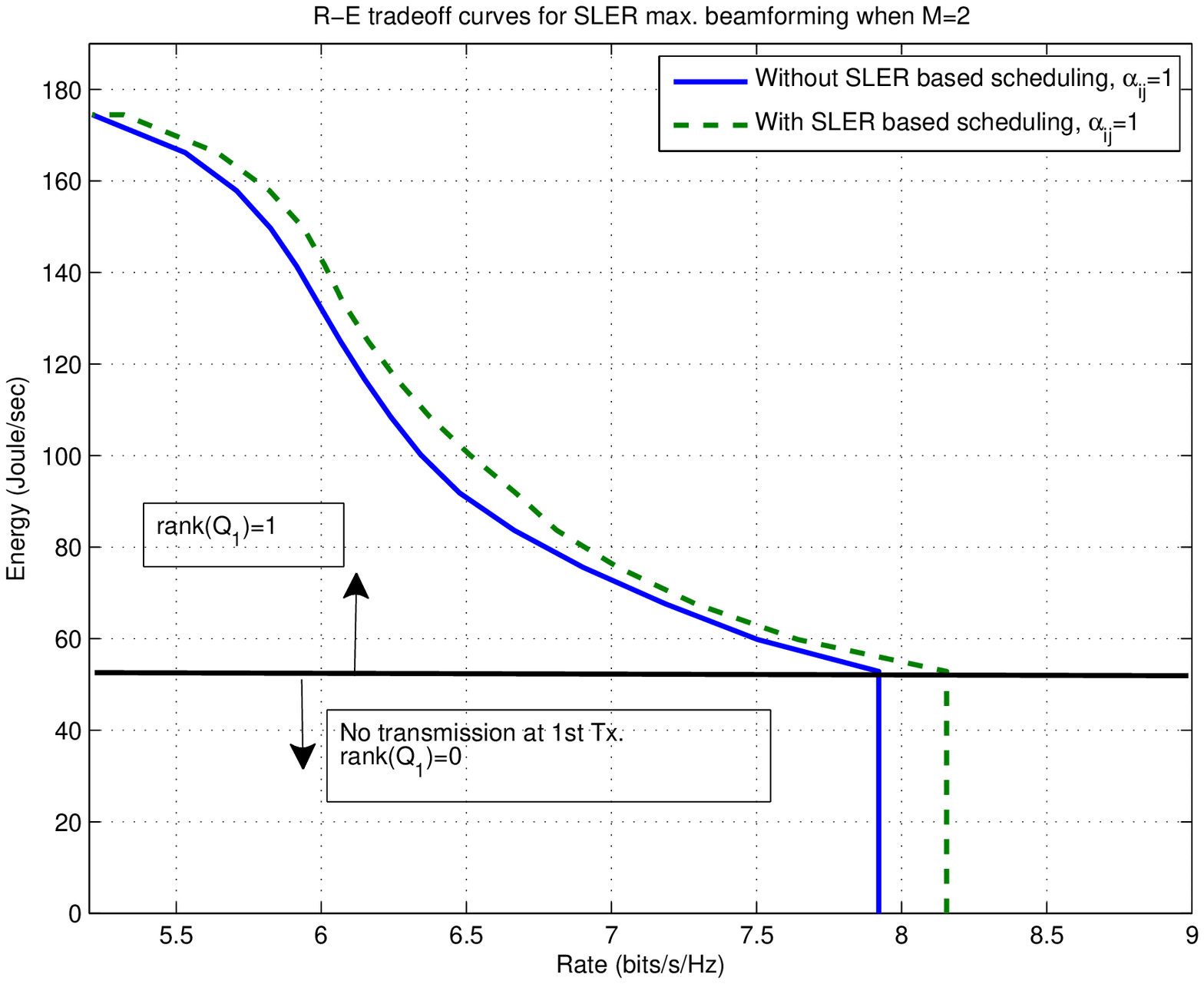}}
 \caption{   R-E tradeoff curves for
SLER maximizing beamforming with/without SLER-based scheduling
when (a) $\alpha_{ij}= 0.7$ and (b) $\alpha_{ij}= 1$ for $i\neq
j$. Here, $M=2$.} \label{R_E_tradeoff_M_2_GSVD_scheduling}
\end{figure}

\section{Conclusion}
\label{sec:conc} In this paper, we have investigated the joint
wireless information and energy transfer in two-user MIMO IFC.
Based on Rx mode, we have different transmission strategies. For
single-operation modes - ($ID_1$, $ID_2$) and ($EH_1$, $EH_2$),
the iterative water-filling and the energy-maximizing beamforming
on both receivers can be adopted to maximize the information bit
rate and the harvested energy, respectively. For ($EH_1$, $ID_2$),
and ($ID_1$, $EH_2$), we have found a necessary condition of the
optimal transmission strategy that one of transmitters should take
a rank-one beamforming with a power control. Accordingly, for two
transmission strategies that satisfy the necessary condition - MEB
and MLB, we have identified their achievable R-E tradeoff regions,
where the MEB (MLB) exhibits larger harvested energy (achievable
rate). We have also found that when the SNR decreases or the
number of antennas increases, the joint information and energy
transfer in the MIMO IFC can be naturally split into disjoint
information and energy transfer in two non-interfering links.
Finally, we have proposed a new transmission strategy satisfying
the necessary condition - signal-to-leakage-and-energy ratio
(SLER) maximization beamforming which shows wider R-E region than
the conventional transmission methods. That is, we have found that
even though the interference degrades the ID performance in the
two-user MIMO IFC, the proposed SLER maximization beamforming
scheme effectively utilizes it in the EH without compromising ID
performance.

Note that, motivated from the rank-one beamforming optimality, the
identification of the optimal R-E boundary will be a challenging
future work. Furthermore, the partial CSI or erroneous channel
information degrades the achievable rate and the harvested energy
at the receivers, which drives us to develop a robust rank-one
beamforming.

\section*{Acknowledgment}
The authors would like to thank the anonymous reviewers whose
comments helped us improve this paper.

\bibliographystyle{IEEEtran}
\bibliography{IEEEabrv,myref}

\begin{thebibliography}{10}
\providecommand{\url}[1]{#1}
\csname url@rmstyle\endcsname
\providecommand{\newblock}{\relax}
\providecommand{\bibinfo}[2]{#2}
\providecommand\BIBentrySTDinterwordspacing{\spaceskip=0pt\relax}
\providecommand\BIBentryALTinterwordstretchfactor{4}
\providecommand\BIBentryALTinterwordspacing{\spaceskip=\fontdimen2\font plus
\BIBentryALTinterwordstretchfactor\fontdimen3\font minus
  \fontdimen4\font\relax}
\providecommand\BIBforeignlanguage[2]{{%
\expandafter\ifx\csname l@#1\endcsname\relax
\typeout{** WARNING: IEEEtran.bst: No hyphenation pattern has been}%
\typeout{** loaded for the language `#1'. Using the pattern for}%
\typeout{** the default language instead.}%
\else
\language=\csname l@#1\endcsname
\fi
#2}}

\bibitem{Soljacic}
A.~Kurs, A.~Karalis, R.~Moffatt, J.~D. Joannopoulos, P.~Fisher, and
  M.~Soljacic, ``Wireless power transfer via strongly coupled magnetic
  resonances,'' \emph{Science}, vol. 137, no.~83, pp. 83--86, Sept. 2007.

\bibitem{Yates}
M.~Pinuela, D.~Yates, S.~Lucyszyn, and P.~Mitcheson, ``Maximizing {DC}-load
  efficiency for semi-resonant inductive power transfer,'' \emph{submitted to
  IEEE Transactions on power electronics}, 2012.

\bibitem{Vullers}
R.~J.~M. Vullers, R.~V. Schaijk, I.~Doms, C.~V. Hoof, and R.~Merterns,
  ``Micropower energy harvesting,'' \emph{Solid-State Electronics}, vol.~53,
  no.~7, pp. 684--693, July 2009.

\bibitem{Fiez}
T.~Le, K.~Mayaram, and T.~Fiez, ``Efficient far-field radio frequency energy
  harvesting for passively powered sensor networks,'' \emph{{IEEE} J.
  Solid-State Circuits}, vol.~43, no.~5, pp. 1287--1302, May 2008.

\bibitem{Ozel}
O.~Ozel, K.~Tutuncuoglu, J.~Yang, S.~Ulukus, and A.~Yener, ``Transmission with
  energy harvesting nodes in fading wireless channels: Optimal policies,''
  \emph{{IEEE} J. Select. Areas Commun.}, vol.~29, no.~8, pp. 1732--1743, Sept.
  2011.

\bibitem{RRajesh}
R.~Rajesh, V.~Sharma, and P.~Viswanath, ``Information capacity of energy
  harvesting sensor nodes,'' in \emph{Proc. {IEEE} International Symposium on
  Information Theory, 2011}, July 2011, pp. 2363--2367.

\bibitem{RRajesh1}
------, ``Information capacity of an energy harvesting sensor node,''
  \emph{submitted to IEEE Transactions on Information Theory,
  http://arxiv.org/abs/1212.3177}, 2012.

\bibitem{KIshibashi}
K.~Ishibashi, H.~Ochiai, and V.~Tarokh, ``Energy harvesting cooperative
  communications,'' in \emph{Proc. {IEEE} International Symposium on Personal,
  Indoor and Mobile Radio Communications, 2012}, Sept. 2012, pp. 1819--1823.

\bibitem{YLuo}
Y.~Luo, J.~Zhang, and K.~B. Letaief, ``Optimal scheduling and power allocation
  for two-hop energy harvesting communication systems,'' \emph{submitted to
  IEEE Transactions on Wireless Communications,
  http://arxiv.org/abs/1212.5394}, 2012.

\bibitem{Zhang1}
R.~Zhang and C.~K. Ho, ``{MIMO} broadcasting for simultaneous wireless
  information and power transfer,'' \emph{submitted to IEEE Transactions on
  Wireless Communications, http://arxiv.org/abs/1105.4999}, 2011.

\bibitem{Zhang2}
L.~Liu, R.~Zhang, and K.~Chua, ``Wireless information transfer with
  opportunistic energy harvesting,'' \emph{submitted to IEEE Transactions on
  Wireless Communications, http://arxiv.org/abs/1204.2035}, 2012.

\bibitem{KHuang1}
K.~Huang and E.~G. Larsson, ``Simultaneous information-and-power transfer for
  broadband downlink systems,'' \emph{submitted to IEEE ICASSP 2013,
  http://arxiv.org/abs/1211.6868}, 2012.

\bibitem{ANasir}
A.~A. Nasir, X.~Zhou, S.~Durrani, and R.~A. Kennedy, ``Relaying protocols for
  wireless energy harvesting and information processing,'' \emph{submitted to
  IEEE Transactions on Wireless Communications,
  http://arxiv.org/abs/1212.5406}, 2012.

\bibitem{Tutuncuoglu1}
K.~Tutuncuoglu and A.~Yener, ``Sum-rate optimal power policies for energy
  harvesting transmitters in an interference channel,'' \emph{Journal of
  Communications and Networks}, vol.~14, no.~2, pp. 151--161, Apr. 2012.

\bibitem{Tutuncuoglu2}
------, ``Transmission policies for asymmetric interference channels with
  energy harvesting nodes,'' in \emph{Proc. {IEEE} International Workshop on
  Computational Advances in Multi-sensor Adaptive Processing, 2011}, Dec. 2011,
  pp. 197--200.

\bibitem{KHuang2}
K.~Huang and V.~K.~N. Lau, ``Enabling wireless power transfer in cellular
  networks: architecture, modeling and deployment,'' \emph{submitted to IEEE
  Transactions on Signal Processing, http://arxiv.org/abs/1207.5640}, 2012.

\bibitem{ZhouZhangHo}
X.~Zhou, R.~Zhang, and C.~K. Ho, ``Wireless information and power transfer:
  architecture design and rate-energy tradeoff,'' \emph{submitted to IEEE
  Transactions on Communications, http://arxiv.org/abs/1205.0618}, 2012.

\bibitem{Shen}
C.~Shen, W.~Li, and T.~Chang, ``Simultaneous information and energy transfer: A
  two-user {MISO} interference channel case,'' in \emph{Proc. {IEEE} GLOBECOM,
  2012}, Dec. 2012.

\bibitem{Scutari}
G.~Scutari, D.~P. Palomar, and S.~Barbarossa, ``The {MIMO} iterative
  waterfilling algorithm,'' \emph{{IEEE} Trans. Signal Processing}, vol.~57,
  no.~5, pp. 1917--1935, May 2009.

\bibitem{WeiYu}
W.~Yu, W.~Rhee, S.~Boyd, and J.~M. Cioffi, ``Iterative water-filling for
  gaussian vector multiple-access channels,'' \emph{{IEEE} Trans. Inform.
  Theory}, vol.~50, no.~1, pp. 145--152, Jan. 2004.

\bibitem{AGoldsmith}
A.~Goldsmith, S.~A. Jafar, N.~Jindal, and S.~Vishwanath, ``Capacity limits of
  {MIMO} channels,'' \emph{{IEEE} J. Select. Areas Commun.}, vol.~21, no.~5,
  pp. 684--702, June 2003.

\bibitem{Sadek}
M.~Sadek, A.~Tarighat, and A.~H. Sayed, ``A leakage-based precoding scheme for
  downlink multi-user {MIMO} channels,'' \emph{{IEEE} Trans. Wireless Commun.},
  vol.~6, no.~5, pp. 1711--1721, May 2007.

\bibitem{JPark}
J.~Park, J.~Chun, and H.~Park, ``Generalised singular value decomposition-based
  algorithm for multi-user {MIMO} linear precoding and antenna selection,''
  \emph{IET Commun.}, vol.~4, no.~16, pp. 1899--1907, Nov. 2010.

\bibitem{Loyka}
S.~L. Loyka, ``Channel capacity of {MIMO} architecture using the exponential
  correlation matrix,'' \emph{{IEEE} Commun. Lett.}, vol.~5, no.~9, pp.
  369--371, Sept. 2001.

\bibitem{Paige}
C.~C. Paige and M.~A. Saunders, ``Towards a generalized singular value
  decomposition,'' \emph{SIAM Journal on Numerical Analysis}, vol.~18, no.~3,
  pp. 398--405, June 1981.

\bibitem{DHarville}
D.~A. Harville, \emph{Matrix algebra from a statistician's perspective}.\hskip
  1em plus 0.5em minus 0.4em\relax Berlin: Springer, 2008.

\bibitem{RHorn}
R.~A. Horn, N.~H. Rhee, and W.~So, ``Eigenvalue inequalities and equalities,''
  \emph{Linear Algebra and Its Applications}, vol. 270, no. 1-3, pp. 29--44,
  Feb. 1998.

\bibitem{Cadambe1}
V.~R. Cadambe and S.~A. Jafar, ``Interference alignment and degrees of freedom
  of the {K}-user interference channel,'' \emph{{IEEE} Trans. Inform. Theory},
  vol.~54, no.~8, pp. 3425--3441, Aug. 2008.

\bibitem{SBoyd}
S.~Boyd and L.~Vandenberghe, \emph{Convex Optimization}, 7th~ed.\hskip 1em plus
  0.5em minus 0.4em\relax New York: Cambridge University Press, 2009.

\bibitem{TCover}
T.~M. Cover and J.~A. Thomas, \emph{Elements of Information Theory},
  2nd~ed.\hskip 1em plus 0.5em minus 0.4em\relax New Jersey: John \& Sons,
  Inc., 2006.

\bibitem{XZhao}
X.~Zhao, P.~B. Luh, and J.~Wang, ``Surrogate gradient algorithm for
  {L}agrangian relaxation,'' \emph{Journal of Optimization Theory and
  Applications}, vol. 100, no.~3, pp. 699--712, Mar. 1999.

\bibitem{JPark2}
J.~Park, J.~Chun, and B.~Jeong, ``Efficient multi-user {MIMO} precoding based
  on {GSVD} and vector perturbation,'' \emph{Signal Processing}, vol.~92,
  no.~2, pp. 611--615, Feb. 2012.

\bibitem{RHGohary}
R.~H. Gohary, W.~Mesbah, and T.~N. Davidson, ``Rate-optimal {MIMO} transmission
  with mean and covariance feedback at low {SNR}s,'' \emph{{IEEE} Trans. Veh.
  Technol.}, vol.~58, no.~7, pp. 3802--3807, Sept. 2009.

\bibitem{SAJafar}
S.~A. Jafar and A.~Goldsmith, ``Transmitter optimization and optimality of
  beamforming for multiple antenna systems,'' \emph{{IEEE} Trans. Wireless
  Commun.}, vol.~3, no.~4, pp. 1165--1175, July 2004.

\bibitem{Marzetta}
T.~L. Marzetta, ``Noncooperative cellular wireless with unlimited numbers of
  base station antennas,'' \emph{{IEEE} Trans. Wireless Commun.}, vol.~9,
  no.~11, pp. 3590--3600, Nov. 2010.

\bibitem{JHoydis}
J.~Hoydis, S.~T. Brink, and M.~Debbah, ``Massive {MIMO}: How many antennas do
  we need?'' in \emph{Proc. Annual Allerton Conference on Communication,
  Control, and Computing, 2011}, Sept. 2011, pp. 545--550.

\bibitem{FRusek}
F.~Rusek, D.~Persson, B.~K. Lau, E.~G. Larsson, T.~L. Marzetta, O.~Edfors, and
  F.~Tufvesson, ``Scaling up {MIMO}: opportunities and challenges with very
  large arrays,'' \emph{{IEEE} Signal Processing Mag.}, vol.~30, no.~1, pp.
  40--60, Jan. 2013.

\bibitem{Papoulis}
A.~Papoulis and S.~U. Pillai, \emph{Probability, Random Variables and
  Stochastic Processes}, 4th~ed.\hskip 1em plus 0.5em minus 0.4em\relax Boston:
  McGraw-Hill, 2002.

\end{thebibliography}

\end{document}